\title{The Complexity of Verifying Loop-Free Programs as Differentially Private} %
\author{Marco Gaboardi}{Boston University, USA}{}{}{}%
\author{Kobbi Nissim}{Georgetown University, USA}{}{https://orcid.org/0000-0002-6632-8645}{}
\author{David Purser}{University of Warwick, UK \and MPI-SWS, Germany}{}{https://orcid.org/0000-0003-0394-1634}{}
\authorrunning{M. Gaboardi, K. Nissim and D. Purser} %
\keywords{differential privacy, program verification, probabilistic programs} %
\newcommand{\pr}{\ensuremath{\Pr}}%
\newcommand{\ems}{\textsc{E-Maj-Sat}\xspace}
\newcommand{\ams}{{\normalfont \textsc{All-Min-Sat}}\xspace}
\newcommand{\afs}{{\normalfont \textsc{All-Frac-$f$-Sat}}\xspace}
\newcommand{\efps}[1]{{\normalfont \textsc{E-Frac-#1-Sat}}\xspace}
\newcommand{\efs}{{\normalfont \efps{$f$}}}
\newcommand{\ssat}{\ensuremath{\#\textsc{Sat}}}
\newcommand{\ccsat}{\ensuremath{\#\textsc{CircuitSat}}}
\renewcommand{\epsilon}{\varepsilon}
\newcommand{\decideedp}{\texorpdfstring{{\normalfont \textsc{Decide-$\epsilon$-DP}}\xspace}{Decide epsilon DP}}
\newcommand{\decideeddp}{\texorpdfstring{{\normalfont \textsc{Decide-$\epsilon,\delta$-DP}}\xspace}{Decide epsilon delta DP}}
\newcommand{\conpsp}{\texorpdfstring{\ensuremath{\coNP^{\#\P}}}{coNP\^{}\#P}}
\newcommand{\conpspsp}{\texorpdfstring{\ensuremath{\coNP^{\#\P^{\#\P}}}}{coNP\^{}\#P\^{}\#P}}
\newcommand{\psp}{\ensuremath{\P^{\#\P}}}
\newcommand{\sharpp}{\ensuremath{\#\P}}
\newcommand{\ptime}{\ensuremath{\P}}
\newcommand\zo[1]{\ensuremath{\{0,1\}^{#1}}}
\newcommand\genalloc[1]{\ensuremath{\bm{{#1}}}}
\newcommand\outpt{\genalloc{o}}
\newcommand\inp{\genalloc{x}}
\newcommand\inpp{\genalloc{x'}}
\newcommand\proballoc{\genalloc{r}}
\newcommand\xa{\genalloc{x}}
\newcommand\ya{\genalloc{y}}
\newcommand\ind[1]{\ensuremath{\mathbb{1}{\left\{#1\right\}}}}
\newcommand\eddpadj[1]{{$(\epsilon#1,\delta#1)$-differentially private}}
\newcommand\eddpnoun[1]{{$(\epsilon#1,\delta#1)$-differential privacy}}
\newcommand\disinguished{\textsc{Distinguish-$(\epsilon,\delta),(\epsilon',\delta')$-DP}\xspace}
\newcounter{SideNoteCounter} \stepcounter{SideNoteCounter}
\newcommand\cutout[1]{}
\newcommand{\generalcircuit}{\ensuremath{\psi}}
\newcommand{\generalformula}{\ensuremath{\phi}}
\newcommand{\remove}[1]{}
\providecommand{\myceil}[1]{\left \lceil #1 \right \rceil }
\providecommand{\bits}[1]{\myceil{\log(v)}}\clearpage{}%
\theoremstyle{definition} 
\newtheorem{reformulation}[theorem]{Reformulation}
\newcommand{\lipicscenter}[1]{\begin{center}$\displaystyle#1$\end{center}}
\begin{document}

\maketitle

\begin{abstract}
We study the problem of verifying differential privacy for loop-free programs with probabilistic choice. Programs in this class can be seen as randomized Boolean circuits, which we will use as a formal model to answer two different questions: first, deciding whether a  program satisfies a prescribed level of privacy; second, approximating the privacy parameters a program realizes. 
We show that the problem of deciding whether a program satisfies $\epsilon$-differential privacy is $\conpsp{}$-complete. In fact, this is the case when either the input domain or the output range of the program is large. Further, we show that deciding whether a program is $(\epsilon,\delta)$-differentially private is $\conpsp{}$-hard, and in $\conpsp{}$ for small output domains, but always in $\conpspsp$. 
Finally, we show that the problem of approximating the level of differential privacy is both $\NP$-hard and $\coNP$-hard. 
These results complement previous results by Murtagh and Vadhan~\cite{murtagh2016complexity} showing that deciding the optimal composition of differentially private components is $\sharpp$-complete, and that approximating the optimal composition of differentially private components is in $\ptime$. \end{abstract}

\section{Introduction}

Differential privacy~\cite{dwork2006calibrating} is currently making significant strides towards being used in large scale applications. Prominent real-world examples include the use of differentially private computations by the US Census' OnTheMap project\footnote{\url{https://onthemap.ces.census.gov}}, applications by companies such as Google and Apple~\cite{erlingsson2014rappor,papernot2016semi,appleDP,dp2017learning}, and the US Census' plan to deploy differentially private releases in the upcoming 2020 Decennial~\cite{abowd2018us}.

More often than not, algorithms and their implementations are analyzed ``on paper'' to show that they provide differential privacy. This analysis---a proof that the outcome distribution of the algorithm is stable under the change in any single individual's information---is often intricate and may contain errors (see~\cite{LyuSL17} for an illuminating  discussion about several wrong versions of the sparse vector algorithm which appeared in the literature). 
Moreover, even if it is actually differentially private, an algorithm may be incorrectly implemented when used in practice, e.g., due to coding errors, or because the analysis makes assumptions which do not hold in finite computers, such as the ability to sample from continuous distributions (see~\cite{mironov-2012-bits} for a discussion about privacy attacks on naive implementations of continuous distributions). Verification tools may help validate, given the code of an implementation, that it would indeed provide the privacy guarantees it is intended to provide. However, despite the many verification efforts that have targeted differential privacy based on automated or interactive techniques (see, e.g.,\cite{reed2010distance,barthe2012probabilistic,tschantz2011formal,fredrikson2014satisfiability,barthe2016proving,zhang2017lightdp,BGGHRS15,albarghouthi2017synthesizing,chistikov2018bisimilarity,chistikov2019asymmetric}), little is known about the complexity of some of the basic problems in this area. Our aim is to clarify the complexity of some of these problems.

In this paper, we consider the computational complexity of determining whether programs satisfy  \eddpnoun{}. The problem is generally undecidable, and we hence restrict our attention to probabilistic loop-free programs, which are part of any reasonable programming language supporting random computations. To approach this question formally, we consider probabilistic circuits. The latter are Boolean circuits with input nodes corresponding both to input bits and to uniformly random bits (``coin flips'') where the latter allow the circuit to behave probabilistically (see Figure~\ref{fig:egrc}). We consider both decision and approximation versions of the problem, where in the case of decision the input consists of a randomized circuit and parameters $\epsilon,\delta$ and in the case of approximation the input is a randomized circuit, the desired approximation precision, and one of the two parameters $\epsilon,\delta$. In both cases, complexity is measured as function of the total input length in bits (circuit and parameters).

Previous works have studied the complexity of composing differentially private components. 
For any $k$ differentially private algorithms with privacy parameters $(\epsilon_1,\delta_1), \dots, (\epsilon_k,\delta_k)$, it is known that their composition is also differentially private~\cite{dwork2006calibrating,DRV10,murtagh2016complexity}, making composition a powerful design tool for differentially private programs. 
However, not all interesting differentially private programs are obtained by composing differentially private components, and a goal of our work is to understand what is the complexity of verifying that full programs are differentially private, and how this complexity differs from the one for programs which result of composing differentially private components.

Regarding the resulting parameters, the result of composing the $k$ differentially private algorithms above results in $(\epsilon_g,\delta_g)$-differentially private for a multitude of possible $(\epsilon_g,\delta_g)$ pairs. 
Murtagh and Vadhan showed that determining the minimal $\epsilon_g$ given $\delta_g$ is $\#\P$-complete~\cite{murtagh2016complexity}. They also gave a polynomial time approximation algorithm that computes $\epsilon_g$ to arbitrary accuracy, giving hope that for ``simple'' programs deciding differential privacy or approximating of privacy parameters may be tractable. Unfortunately, our results show that this is not the case.

\subsection{Contributions}

Following the literature, we refer to the variant of differential privacy where $\delta=0$ as {\em pure} differential privacy and to the variant where $\delta>0$ as {\em approximate} differential privacy. We contribute in three directions. 

\begin{itemize}
\item \textbf{Bounding pure differential privacy.}  We show that determining whether a randomized circuit is $\epsilon$-differentially private is $\conpsp$-complete.\footnote{The class $\conpsp$ is contained in $\PSPACE$ and contains the polynomial hierarchy (as, per Toda's Theorem, $\PH\subseteq \psp$).\vspace{-0.8cm}} To show hardness in $\conpsp$ we consider a complement to the problem $\ems$~\cite{littman1998computational}, which is complete for $\NP^{\#\P}$~\cite{chistikov2017approximate}. In the complementary problem, $\ams$, given a formula $\generalformula{}$ over $n+m$ variables the task is to determine if for all allocations $\xa \in\zo{n}$, $\phi(\xa,\ya)$ evaluates to true on no more than $\frac{1}{2}$ of allocations to $\ya \in {\zo{m}}$.

\item \textbf{Bounding approximate differential privacy.} Turning to the case where $\delta >0$, we show that determining whether a randomized circuit is $(\epsilon,\delta)$-differentially private is $\conpsp$-complete when the number of output bits is small relative to the total size of the circuit and otherwise between $\conpsp$ and $\conpspsp{}$. 

\item \textbf{Approximating the parameters $\epsilon$ and $\delta$.} Efficient approximation algorithms exist for optimal composition~\cite{murtagh2016complexity}, and one might expect the existence of polynomial time algorithms to approximate $\epsilon$ or $\delta$ in randomized circuits. We show this is $\NP$-hard and $\coNP$-hard, and therefore an efficient algorithm does not exist (unless $\P = \NP$).

\end{itemize}Our results show that for loop-free programs with probabilistic choice directly verifying whether a program is differentially private is intractable. These results apply to programs in any reasonable programming language supporting randomized computations. Hence, they set the limits on where to search for automated techniques for these tasks.   

\paragraph*{The relation to quantitative information flow} Differential privacy shares similarities with quantitative information flow~\cite{Denning82,Gray90}, which is an entropy-based theory measuring how secure a program is. Alvim et al.~\cite{alvim2011relation} showed that a bound on pure differential privacy implies a bound on quantitative information flow. So, one could hope that bounding differential privacy could be easier than bounding quantitative information flow. 
Yasuoka and Terauchi~\cite{yasuoka2010bounding} have shown that bounding quantitative information flow for loop free boolean programs with probabilistic choice is $\PP$-hard (but in $\PSPACE$). In contrast, our results show that 
bounding pure differential privacy is $\conpsp$-complete. 
Chadha et al.~\cite{chadha2014quantitative} showed the problem  to be $\PSPACE$-complete for boolean programs with loops 
and probabilistic choice (notice that this would be not true for programs with integers). We leave the analogous question for future works.

\section{Preliminaries}
\label{sec:prelims}
\paragraph*{Numbers} By a \textit{number given as a rational} we mean a number of the form $\frac{x}{y}$ where $x,y$ are given as binary integers. %

\subsection{Loop-free probabilistic programs}\label{sec:lfpp}
We consider a simple loop-free imperative programming language built over Booleans, and including probabilistic choice. 
\[\begin{array}{rcl@{\qquad}r}
x&::=& [\text{a}{-}\text{z}]^+ & \text{(variable identifiers)} \\
b&::=& {\tt true}\ |\ {\tt false}\ |\ {\tt random}\ |\ x\ |\  b \wedge b\ |\ b \vee b\ |\  \neg b\  & \text{(boolean expressions)} \\
c&::=& {\tt SKIP}\ |\ x := b\ |\ c;\ c\ |\ {\tt if}\ b\ {\tt then}\ c\ {\tt else}\ c & \text{(commands)} \\
t&::=& x\ |\ t, x & \text{(list of variables)} \\
p&::=& {\tt input}(t);\  c;\ {\tt return}(t) & \text{(programs)} 
\end{array}
\] \newpage

Probabilistic programs \cite{DBLP:journals/jcss/Kozen81} extend standard programs with the addition of coin tosses; this is achieved by the probabilistic operation $\texttt{random}$, which returns either \texttt{true} or \texttt{false} with equal probability. A standard operation, sometimes denoted by $c \oplus c$, which computes one of the two expressions with probability $\frac{1}{2}$ each is achieved with ${\tt if\ random\ then }\ c\ {\tt else}\ c$. The notation $c \oplus c$ is avoided as $\oplus$ refers to \textit{exclusive or} in this paper.

The semantics of the programming language are standard and straight forward. Without loss of generality, each variable assignment is final, that is, each assignment must go to a fresh variable. Looping behaviour is not permitted, although bounded looping can be encoded by unrolling the loop. 

\begin{remark} \label{remark:integers} Our results also hold when the language additionally supports integers and the associated operations (e.g. ${+},{\times},{-},{\ge},{=}$ etc.), providing the integers are of a bounded size. 
Such a language is equally expressive as the language presented here. Further details are given in the full version of the paper.
\end{remark}

\subsection{Probabilistic circuits}
\begin{figure*}[t]
\centering
\includegraphics[width=0.8\linewidth]{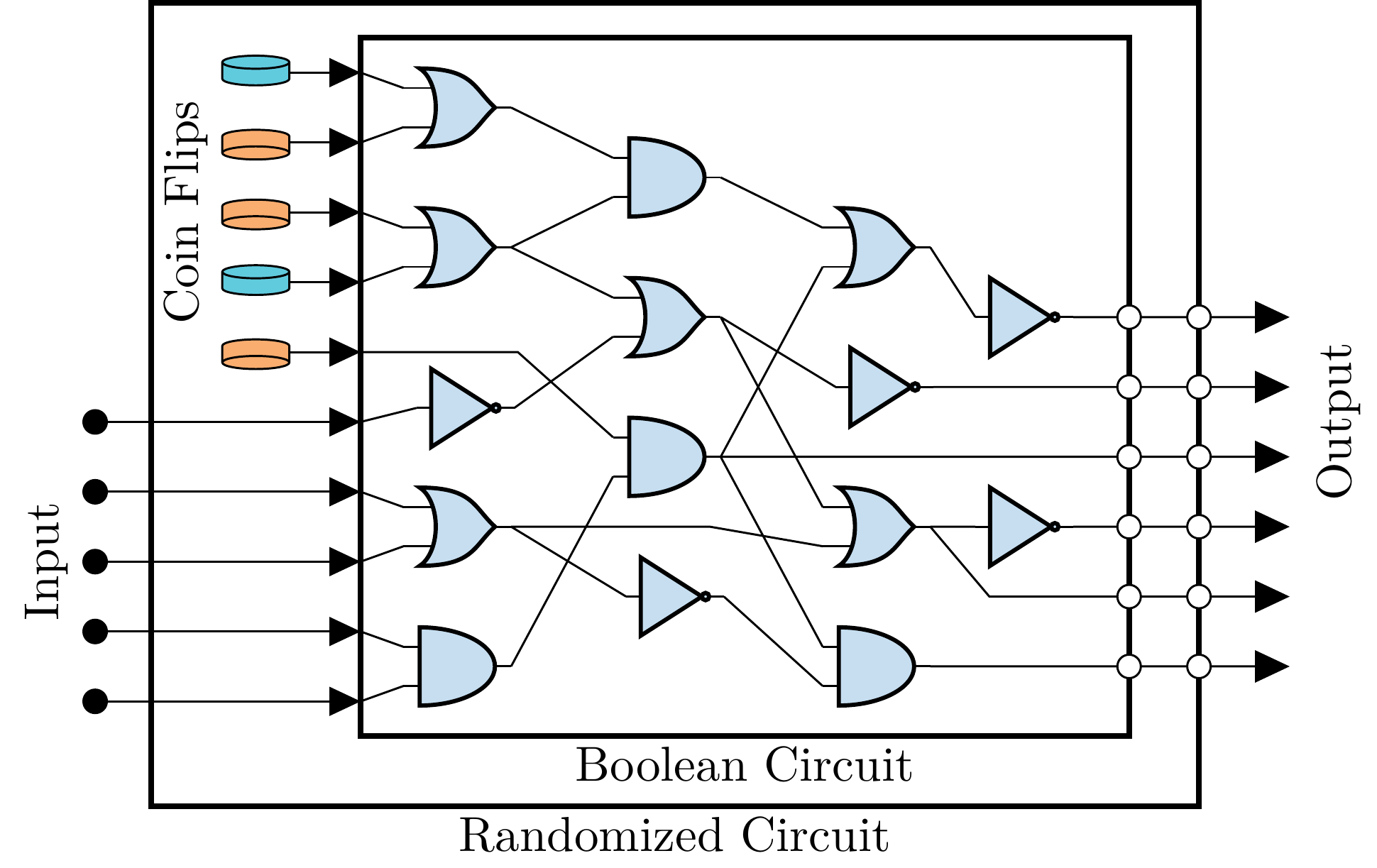}
\caption{Example randomized circuit.}
\label{fig:egrc}
\end{figure*}
\begin{definition}\label{defn:probcircuit}
A Boolean circuit $\generalcircuit{}$ with $n$ inputs and $\ell$ outputs is a directed acyclic graph $\generalcircuit{}=(V,E)$ containing $n$ input vertices with zero in-degree, labeled $X_1,\ldots,X_n$ and $\ell$ output vertices with zero out-degree, labeled $O_1,\ldots,O_\ell$. Other nodes are assigned a label in $\{\wedge, \vee, \neg\}$, with vertices labeled $\neg$ having in-degree one and all others having in-degree two. The size of $\generalcircuit{}$, denoted $|\generalcircuit{}|$, is defined to be $|V|$. 
A {\em randomized circuit} has $m$ additional random input vertices labeled $R_1,\ldots,R_m$.

Given an input string $\xa{}=(x_1,\ldots,x_n)\in\zo{n}$, the circuit is evaluated as follows. First, the values $x_1,\ldots,x_n$ are assigned to the nodes labeled $X_1,\ldots,X_n$. Then, $m$ bits $\proballoc{}=(r_1,\ldots,r_m)$ are sampled uniformly at random from $\zo{m}$ and assigned to the nodes labeled $R_1,\ldots,R_m$. Then, the circuit is evaluated in topological order in the natural way. E.g., let $v$ be a node labeled $\wedge$ with incoming edges $(u_1,v), (u_2,v)$ where $u_1, u_2$ were assigned values $z_1,z_2$ then $v$ is assigned the value $z_1\wedge z_2$. The outcome of $\generalcircuit{}$ is $(o_1,\ldots,o_\ell)$,  the concatenation of values assigned to the $\ell$ output vertices $O_1,\ldots,O_\ell$.
\goodbreak
For input $\xa{}\in\zo{n}$ and event $E \subseteq \zo{\ell}$ we have
\lipicscenter{\pr[\generalcircuit{}(\xa{}) \in E] = \frac{|\{\proballoc{}\in\zo{m} \; : \; \generalcircuit{}(\xa{}, \proballoc{}) \in E \}|}{2^m}.} 
\end{definition}

\begin{remark}
The operators, $\wedge, \vee$ and $\neg$ are functionally complete. However, we will also use $\oplus$ (exclusive or), such that $p \oplus q \iff  (p \vee q) \wedge  \neg (p \wedge q)$. \end{remark}

\subsection{Equivalence of programs and circuits}

\begin{lemma}\label{lemma:circuiteqprogram}
A \textit{loop-free probabilistic program} can be converted into an equivalent \textit{probabilistic boolean circuit} in linear time in the size of the program (and vice-versa).
\end{lemma}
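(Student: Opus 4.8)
The plan is to establish the equivalence by giving explicit, structure-preserving translations in both directions, and arguing that each incurs only a constant blow-up per syntactic construct, so that the total cost is linear in the size of the source object. I would phrase the claim as two separate constructions joined by a correctness argument showing that the induced output distributions agree for every input.

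For the forward direction (program to circuit), I would proceed by structural induction on the grammar of boolean expressions and commands given in Section~\ref{sec:lfpp}. The key idea is to maintain, during the translation, a symbolic environment mapping each program variable to the circuit vertex currently holding its value; since assignments are to fresh variables (as noted in the semantics), this mapping is build-once and never overwritten, which keeps the bookkeeping linear. A boolean expression $b$ translates compositionally: the literals \texttt{true} and \texttt{false} become constant gates, a variable $x$ maps to its vertex in the environment, and the connectives $\wedge,\vee,\neg$ become the correspondingly labelled gates wired to the subexpression outputs, each adding $O(1)$ vertices. The crucial case is \texttt{random}, which I translate to a fresh random input vertex $R_i$; this is exactly what aligns the coin tosses of the program with the uniform random bits $\proballoc{}$ of Definition~\ref{defn:probcircuit}. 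For commands, sequencing $c;c$ concatenates the two translations (threading the environment through), assignment $x := b$ records the vertex for $b$ under $x$, and the conditional \texttt{if}~$b$~\texttt{then}~$c_1$~\texttt{else}~$c_2$ is handled by translating both branches and using a standard multiplexer gadget $(g_b \wedge o_1) \vee (\neg g_b \wedge o_2)$ on each output, again only a constant number of gates per output wire. Finally \texttt{input} and \texttt{return} fix the input vertices $X_1,\dots,X_n$ and designate the output vertices $O_1,\dots,O_\ell$.

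For the reverse direction (circuit to program), I would topologically sort the DAG and emit one fresh-variable assignment per non-input vertex in that order: a $\wedge$-vertex with predecessors holding variables $u,v$ becomes $w := u \wedge v$, and similarly for $\vee$ and $\neg$; each random input vertex $R_i$ becomes an assignment $r_i := \texttt{random}$; input vertices correspond to the \texttt{input} list and the output vertices to the \texttt{return} list. Because each vertex yields one assignment of constant textual size, the program has length linear in $|\generalcircuit{}|$. Correctness in both directions follows by a straightforward induction showing that, under any fixed input $\xa$ and any fixed setting of the coin tosses/random vertices $\proballoc{}$, the value computed for each program variable equals the value at the corresponding circuit vertex; summing over the $2^m$ equally-likely settings of $\proballoc{}$ then gives equality of the output distributions via the formula for $\pr[\generalcircuit{}(\xa) \in E]$ in Definition~\ref{defn:probcircuit}.

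I expect the main obstacle to be the faithful handling of the \texttt{if}-\texttt{then}-\texttt{else} command in the program-to-circuit direction while keeping the translation linear. Unlike the straight-line constructs, a conditional must commit to \emph{both} branches in the circuit model (circuits have no control flow), so one must argue that multiplexing the outputs of the two translated branches reproduces the program semantics exactly, including the fact that both branches are evaluated but only the selected one's result is observed, and that this is semantically sound because the language has no side effects beyond variable binding. Care is also needed to ensure the coin tosses are counted correctly: distinct textual occurrences of \texttt{random} must map to distinct random vertices so that the probabilistic semantics match, and one should confirm that unrolled branches do not duplicate or share randomness in a way that alters the distribution. Once the multiplexer gadget is seen to add only a constant number of gates per output and the randomness alignment is pinned down, the linear bound and the distributional equivalence both fall out of the inductive invariant.
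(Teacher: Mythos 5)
Your proposal is correct and takes essentially the same approach as the paper: per-construct constant-size sub-circuits with a fresh random input vertex for each occurrence of \texttt{random} and the multiplexer gadget $((x_L)_i \wedge b) \vee ((x_R)_i \wedge \neg b)$ for conditionals in the program-to-circuit direction, and a topological sort emitting one fresh-variable assignment per vertex in the reverse direction. The only detail worth adding is that the circuits of Definition~\ref{defn:probcircuit} have no constant gates, so \texttt{true} and \texttt{false} must be encoded, e.g.\ as $x \vee \neg x$ (respectively $x \wedge \neg x$) over an arbitrary input or random bit, exactly as the paper's appendix does.
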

\begin{proof}[Proof sketch]
It is  clear that  a probabilistic circuit can be expressed as a probabilistic program  using just  boolean operations by expressing a variable for each vertex after sorting the vertices in topological order.

To convert a probabilistic Boolean program into a probabilistic circuit, each of the commands can be handled using a fixed size sub-circuit, each of which can be composed together appropriately.
\end{proof}

Given the equivalence between loop-free probabilistic programs and probabilistic circuits, the remainder of the paper will use probabilistic circuits.

\subsection{Differential privacy in probabilistic circuits}

Let $X$ be any input domain. An input to a differentially private analysis would generally be an array of elements from a data domain $X$, each corresponding to the information of an individual, i.e., $\xa=(x_1,\ldots,x_n)\in X^n$. 

The definition of differential privacy depends on adjacency between inputs, we define \textit{neighboring} inputs.

\begin{definition}
Inputs $\xa=(x_1,\ldots,x_n)$ and $\xa'=(x'_1,\ldots,x'_n)\in X^n$ are called {\em neighboring} if there exist $i\in[n]$ s.t.\ if $j \ne i $ then $x_j = x'_j$.
\end{definition}

In this work, we will consider input domains with finite representation. Without loss of generality we set $X=\zo{k}$ and hence an array $x=(x_1,\ldots,x_n)$ can be written as a sequence of $nk$ bits, and given as input to a (randomized) circuit with $nk$ inputs. Our lower bounds work already for for $k = 1$ and our upper bounds are presented using $k = 1$ but generalise to all $k$.

\begin{definition}[Differential Privacy~\cite{dwork2006calibrating,dwork2006our}]
A probabilistic circuit $\generalcircuit{}$ is \eddpadj{} if for all neighboring $\xa{},\xa{}'\in X^n$ and for all $E \subseteq\zo{\ell}$,
\lipicscenter{\pr[\psi(\xa{}) \in E] \le e^\epsilon \cdot \pr[\psi(\xa{}') \in E] + \delta.}
\end{definition}

Following common use, we refer to the case where $\delta=0$ as {\em pure} differential privacy and to the case where $\delta >0$ as {\em approximate} differential privacy. When omitted, $\delta$ is understood to be zero.

\goodbreak
\subsection{Problems of deciding and approximating differential privacy}

We formally define our three problems of interest.
\begin{definition}
The problem \decideedp asks, given $\epsilon$ and $\generalcircuit{}$, if $\generalcircuit{}$ is $\epsilon$-differentially private. We assume $\epsilon$ is given by the input $e^\epsilon$ as a rational number.
\end{definition}

\begin{definition}
The problem \decideeddp asks, given $\epsilon$, $\delta$ and $\generalcircuit{}$, if $\generalcircuit{}$ is \eddpadj{}. We assume $\epsilon$ is given by the input $e^\epsilon$ as a rational number.
\end{definition}

\begin{definition} \label{def:approx}
Given an approximation error %
$\gamma >0$, the \textsc{Approximate}-$\delta$ problem and the \textsc{Approximate}-$\epsilon$ problem, respectively, ask: 
\begin{itemize}
  \item Given $\epsilon$, find $\hat \delta \in [0,1]$, such that $0\le \hat\delta-\delta \le \gamma$, where $\delta$ is the minimal value such that $\psi$ is \eddpadj{}.
  \item Given $\delta$, find $\hat \epsilon \geq 0$, such that $0\le \hat \epsilon -\epsilon \le \gamma$,  where $\epsilon$ is the minimal value such that $\psi$ is \eddpadj{}.
\end{itemize}
\end{definition}

\subsection{The class \conpsp}

The complexity class $\#\P$ is the counting analogue of $\NP$ problems. In particular $\ssat$, the problem of counting the number of satisfying assignments of a given a boolean formula $\generalformula{}$ on $n$ variables, is complete for $\#\P$. Similarly $\ccsat{}$, the problem of counting the satisfying assignments of a circuit with a single output, is complete for $\#\P$.

A language $L$ is in \conpsp{} if membership in $L$ can be refuted using a polynomial time non-deterministic Turing machine with access to a $\#\P$ oracle.  %
It is easy to see that $\conpsp{} = \coNP^{\PP}$, and 
$\PH\subseteq \conpsp{} \subseteq \PSPACE$, where $\PH\subseteq \conpsp{}$ follows by Toda's theorem ($\PH\subseteq \P^{\#\P}$)~\cite{toda1991pp}.

The following decision problem is complete for $\NP^{\#\P}$~\cite{chistikov2017approximate}:
\begin{definition}
\ems asks, given $\generalformula{}$ a quantifier free formula over $n+m$ variables if there exist an allocation $\xa \in\zo{n}$ such that there are strictly greater than $\frac{1}{2}$ of allocations to $\ya \in {\zo{m}}$ where $\phi(\xa,\ya)$ evaluates to true.
\end{definition}

The complementary problem $\ams$, is complete for $\conpsp{}$: a formula $\generalformula{}$ is $\ams$, if $\generalformula{}$ is not $\ems$. That is, $\generalformula{}$ a quantifier free formula over $n+m$ variables is $\ams$ if for all allocations $\xa \in\zo{n}$ there are no more than $\frac{1}{2}$ of allocations to $\ya \in {\zo{m}}$ where $\phi(\xa,\ya)$ evaluates to true.

\section{The complexity of deciding pure differential privacy}
\label{sec:edp}

In this section we classify the complexity of deciding $\epsilon$-differential privacy, for which we show the following theorem:

\begin{theorem}\label{thm:dedpconpsp}
\decideedp is $\conpsp$-complete.
\end{theorem}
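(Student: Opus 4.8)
The plan is to prove Theorem~\ref{thm:dedpconpsp} by establishing membership in $\conpsp{}$ and $\conpsp{}$-hardness separately. For membership, I would first unfold the definition of $\epsilon$-differential privacy. Since $\delta=0$, a circuit $\generalcircuit{}$ is $\epsilon$-DP if and only if for all neighboring inputs $\xa{},\xa{}'$ and all events $E\subseteq\zo{\ell}$ we have $\pr[\generalcircuit{}(\xa{})\in E]\le e^\epsilon\cdot\pr[\generalcircuit{}(\xa{}')\in E]$. The key simplification is that it suffices to check this on singleton events $E=\{\outpt\}$ for each output string $\outpt\in\zo{\ell}$: the worst-case ratio is always attained by accumulating individual outputs, so the universal quantifier over exponentially many events $E$ collapses to a universal quantifier over outputs $\outpt$. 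Thus $\generalcircuit{}$ is \emph{not} $\epsilon$-DP iff there exist neighboring $\xa{},\xa{}'$ and an output $\outpt$ with $\pr[\generalcircuit{}(\xa{})=\outpt] > e^\epsilon\cdot\pr[\generalcircuit{}(\xa{}')=\outpt]$.

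To place the complement of \decideedp in $\NP^{\#\P}$ (equivalently, \decideedp in $\conpsp{}$), I would guess nondeterministically the neighboring pair $\xa{},\xa{}'$ (which differ in one coordinate, so polynomially many bits) together with the candidate witnessing output $\outpt$. Given these guesses, the two probabilities are each a count of satisfying random strings divided by $2^m$: the quantity $|\{\proballoc{}:\generalcircuit{}(\xa{},\proballoc{})=\outpt\}|$ is exactly a $\#\P$ computation (a $\ccsat{}$-style count, after hardwiring $\xa{}$ and $\outpt$ into the circuit and taking the conjunction that the output equals $\outpt$). With two calls to the $\#\P$ oracle returning integer counts $c=|\{\proballoc{}:\generalcircuit{}(\xa{},\proballoc{})=\outpt\}|$ and $c'$, the machine accepts the violation iff $c\cdot y > e^\epsilon\text{-numerator}\cdot c'$, which is a comparison of integers computable in polynomial time once $e^\epsilon=\frac{x}{y}$ is given as a rational. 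This shows the complement is in $\NP^{\#\P}$, hence \decideedp is in $\conpsp{}$.

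For hardness, I would reduce from $\ams$, the $\conpsp{}$-complete problem. Given a formula $\generalformula{}$ over $n+m$ variables, I want to build a randomized circuit that is $\epsilon$-DP (for a suitable choice of $\epsilon$, likely a small constant) precisely when $\generalformula{}$ is a yes-instance of $\ams{}$, i.e.\ when for every $\xa{}\in\zo{n}$ at most half of the $\ya{}\in\zo{m}$ satisfy $\generalformula{}(\xa{},\ya{})$. The idea is to treat the $\xa{}$ variables as (a gadget controlled by) the circuit's data input and the $\ya{}$ variables as its random bits, engineering the output distribution so that the differential-privacy ratio between the appropriate neighboring inputs directly encodes the fraction of satisfying $\ya{}$. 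A natural design is to have a single data bit select between a "control" distribution with a known fixed mass and a distribution whose mass on the designated output equals $\frac{|\{\ya{}:\generalformula{}(\xa{},\ya{})\}|}{2^m}$; the DP inequality with the right $\epsilon$ then holds for all neighboring inputs exactly when this fraction never exceeds $\frac12$ across all $\xa{}$.

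The main obstacle I expect is the hardness reduction, specifically calibrating the circuit and the value of $e^\epsilon$ so that the threshold "fraction $\le\frac12$ for all $\xa{}$" maps \emph{exactly} onto the DP condition "ratio $\le e^\epsilon$ for all neighbors and all outputs", while simultaneously quantifying over all $\xa{}$. The universal quantifier over $\xa{}$ in $\ams{}$ must be absorbed into the universal quantifier over neighboring inputs (or over outputs) of the privacy definition, rather than being simulated by randomness, since averaging over $\xa{}$ would test only the average fraction, not the worst case. Getting this alignment right---ensuring no spurious violations arise from other output strings or other neighbor pairs, and that the comparison is strict/non-strict in the correct direction to match the $\frac12$ boundary of $\ams{}$---is the delicate part; the membership direction, by contrast, is largely a matter of observing that the ratio is witnessed by a single output and that each probability is a $\#\P$ count.
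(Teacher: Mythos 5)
Your membership argument is essentially the paper's: collapse events to singleton outputs via the pointwise reformulation, nondeterministically guess neighboring $\xa,\xa'$ and an output $\outpt$, obtain the two counts $|\{\proballoc : \generalcircuit(\xa,\proballoc)=\outpt\}|$ and $|\{\proballoc : \generalcircuit(\xa',\proballoc)=\outpt\}|$ from a $\#\P$ oracle by hardwiring $\xa$ and $\outpt$ into a \ccsat{} instance, and compare integers using the rational representation of $e^\epsilon$. That half is correct and matches the paper.

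The gap is in the hardness reduction, and it is exactly at the spot you flagged as ``delicate'' but assumed your design would handle. Your proposed gadget---a data bit selecting between a control distribution of fixed mass $\frac{1}{2}$ and a distribution placing mass $f(\xa)=\frac{|\{\ya\,:\,\generalformula(\xa,\ya)\}|}{2^m}$ on a designated output---does not make ``$\generalformula\in\ams$'' equivalent to ``$\generalcircuit$ is $\epsilon$-DP,'' because differential privacy of such a randomized-response-style circuit is \emph{not monotone} in $f$. The DP inequality must hold in both directions: besides $f(\xa)\le e^\epsilon\cdot\frac{1}{2}$ you also need $\frac{1}{2}\le e^\epsilon\cdot f(\xa)$, and the latter fails whenever $f(\xa)$ is near $0$. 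But $f(\xa)=0$ is perfectly consistent with \ams{} (which only requires $f(\xa)\le\frac{1}{2}$ for all $\xa$): an unsatisfiable $\generalformula$ is a yes-instance, yet your circuit would then be deterministic on one branch and violate every finite $\epsilon$. The same defect creates spurious violations between neighbors differing in an $\xa$-coordinate, e.g.\ $f(\xa)=\frac{1}{2}$ and $f(\xa')=0$, both allowed under \ams{} but giving unbounded ratio.

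The missing idea is a calibration gadget that maps the fraction $f$ into a range bounded away from $0$ and $1$---the region where randomized response is uniformly private---so that crossing the $\frac{1}{2}$ threshold of \ams{}, and only that, pushes the mass out of the safe region. The paper does this with one extra random bit: the output bit is $z\oplus\neg(p_1\vee\generalformula(\xa,\ya))$, so the flip probability is $q(\xa)=\frac{1}{2}\bigl(1-f(\xa)\bigr)$, an affine map sending $f\in[0,\frac{1}{2}]$ to $q\in[\frac{1}{4},\frac{1}{2}]$ (where randomized response is $\ln(3)$-DP, even at $f=0$) and $f>\frac{1}{2}$ to $q<\frac{1}{4}$ (where the ratio exceeds $3$). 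Note also that the paper's main proof absorbs the $\forall\xa$ quantifier into the \emph{outputs}---$\xa$ is sampled randomly and revealed as part of the output, with a single input bit $z$---which sidesteps the $\xa$-neighbor issue entirely; your input-side variant can be made to work (the paper does this in an appendix, for $\Omega(n)$-bit input and $1$-bit output), but it needs the same skewing device together with a case analysis over neighbors that differ in an $\xa$-coordinate. Without some such calibration, the equivalence your reduction rests on is simply false in the forward direction.
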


It will be convenient to consider the well-known simpler reformulation of the definition of pure differential privacy in finite ranges to consider specific outcomes $\outpt\in \zo{\ell}$ rather than events $E\subseteq\zo{\ell}$.

\begin{reformulation}[Pure differential privacy]
A probabilistic circuit $\generalcircuit{}$ is $\epsilon$-differentially private if and only if for all neighboring $\inp{},\inpp{}\in X^n$  and for all $\outpt{} \in \zo{\ell}$,
\lipicscenter{\pr[\psi(\inp{}) = \outpt{}] \le e^\epsilon \cdot \pr[\psi(\inpp{}) = \outpt{}]. }
\end{reformulation}

\subsection{\decideedp is in \conpsp}

We show a non-deterministic Turing machine which can `refute' $\generalcircuit{}$ being $\epsilon$-differentially private in (non-deterministic) polynomial time with a $\#\P$ oracle. 
A circuit $\generalcircuit{}$ is shown not to be  $\epsilon$-differentially private by exhibiting a combination $\inp,\inpp,\outpt{}$ such that 
$\pr[\psi(\inp) =\outpt{}] > e^\epsilon \cdot \pr[\psi(\inpp) = \outpt{}]. $ The witness to the non-deterministic Turing machine would be a sequence of $2n$ bits parsed as neighboring inputs $\inp{},\inpp{} \in \zo{n}$ and $\ell$ bits describing an output $\outpt{} \in \zo{\ell}$. 
The constraint can then be checked in polynomial time, using the $\#\P$ oracle to compute $\pr[\psi(\inp) =\outpt{}]$ and $\pr[\psi(\inpp) =\outpt{}]$. 

To compute $\pr[\psi(\inp) =\outpt{}]$ in $\#\P$ we create an instance to $\ccsat$, which will count the number of allocations to the $m$ probabilistic bits consistent with this output. We do this by extending $\generalcircuit{}$ with additional gates reducing to a single output which is true only when the input is fixed to $\inp$ and the output of $\generalcircuit{}$ was $\outpt{}$. 

\subsection{\conpsp-hardness of \decideedp}

To show  $\conpsp$-hardness of $\decideedp{}$ we show a reduction from  $\ams$ in \cref{lem:amstodecideedp}; together with the inclusion result above, this entails that $\decideedp{}$ is $\conpsp$-complete (\cref{thm:dedpconpsp}). 

Randomized response~\cite{warner1965randomized} is a technique for answering sensitive Yes/No questions by flipping the answer with probability $p \le 0.5.$ Setting $p= \frac{1}{1 + e^\epsilon}$ gives $\epsilon$-differential privacy. Thus $p=0$ gives no privacy and $p=0.5$ gives total privacy (albeit no utility).

\begin{definition}[Randomized Response]\label{section:aboutrr}
\[RR_\epsilon(x) = \begin{cases}  x & \text{w.p.}\quad \frac{e^\epsilon}{1 + e^\epsilon} \\ \neg x & \text{w.p.}\quad  \frac{1}{1 + e^\epsilon}\end{cases}\]
\end{definition}

\begin{lemma}\label{lem:amstodecideedp}\ams reduces in polynomial time to \decideedp.
\end{lemma}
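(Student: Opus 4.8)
The plan is to reduce an arbitrary instance of $\ams$ to an instance of $\decideedp$. Given a quantifier-free formula $\generalformula{}$ over $n+m$ variables, I want to build a randomized circuit $\generalcircuit{}$ and an $\epsilon$ such that $\generalcircuit{}$ is $\epsilon$-differentially private if and only if $\generalformula{}$ is a yes-instance of $\ams$, i.e.\ for every $\xa\in\zo{n}$ at most half of the allocations to $\ya\in\zo{m}$ satisfy $\phi(\xa,\ya)$. The key observation linking the two problems is that $\ams$ is fundamentally a question about whether a count over $m$ probabilistic bits exceeds a threshold of $2^{m-1}$ \emph{for all} settings of the $n$ universally quantified bits, and differential privacy (in its Reformulation) is also a universally quantified statement over inputs bounding a ratio of probabilities, where each probability is itself a count over the random bits divided by $2^m$.

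The core of the construction is to treat the $n$ variables $\xa$ of $\generalformula{}$ as the circuit input and the $m$ variables $\ya$ as the circuit's random bits, then arrange a single output bit whose probability of being, say, $1$ on input $\xa$ equals $\Pr[\phi(\xa,\ya)=\text{true}]$, the fraction of satisfying $\ya$. The difficulty is that differential privacy compares two \emph{neighboring} inputs, whereas $\ams$ quantifies over a single input $\xa$ with a fixed threshold $\tfrac12$. To force the comparison to be against a fixed reference value of $\tfrac12$ regardless of which $\xa$ is chosen, I plan to add one extra input bit $b$ that acts as a selector: when $b=0$ the circuit outputs $\phi(\xa,\ya)$ (so $\Pr[\text{output}=1]=\Pr_{\ya}[\phi(\xa,\ya)]$), and when $b=1$ the circuit ignores $\generalformula{}$ and outputs a fresh random bit, so $\Pr[\text{output}=1]=\tfrac12$ exactly. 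The two settings of $b$ are neighboring inputs, so the privacy constraint on the outcome $\outpt=1$ becomes precisely a comparison between $\Pr_{\ya}[\phi(\xa,\ya)]$ and $\tfrac12$.

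With this selector in place I would then calibrate $\epsilon$ so that the privacy inequality holds exactly when the count is at most half. Concretely, the worst case for the ratio $\Pr[\psi(\inp)=\outpt]/\Pr[\psi(\inpp)=\outpt]$ arises when $\Pr_{\ya}[\phi(\xa,\ya)]$ is as large as possible relative to $\tfrac12$; since the count is an integer out of $2^m$, the boundary between ``at most $\tfrac12$'' and ``more than $\tfrac12$'' is the gap between $2^{m-1}$ and $2^{m-1}+1$ satisfying assignments. Choosing $e^\epsilon$ to sit strictly between the ratio $\tfrac{2^{m-1}}{2^{m-1}}=1$ and $\tfrac{2^{m-1}+1}{2^{m-1}}$ lets a single threshold distinguish the two cases; I expect that setting $e^\epsilon = \tfrac{2^{m-1}+1}{2^{m-1}}$ (a rational number of polynomial bit-length, as required by the problem's input convention) makes the privacy inequality an equality at the boundary and fail exactly one step beyond it. I will use randomized response (Definition~\ref{section:aboutrr}) on the output bit as a clean mechanism to inject the needed slack uniformly, so that outputs other than the distinguished one, and the zero-probability corner cases, never violate privacy more severely than the distinguished outcome $\outpt$.

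The main obstacle, and the step I would spend the most care on, is verifying that \emph{no other} choice of neighboring inputs, output value, or direction of the inequality can spuriously violate $\epsilon$-differential privacy and thereby break the reverse direction of the reduction. I must check all pairs: the two selector settings in both orders, inputs differing in one of the $\xa$ bits with $b$ fixed, and both possible output values $0$ and $1$. The randomized-response wrapping is precisely the tool to control these side conditions, since it guarantees every output probability is bounded away from $0$ (so no ratio is infinite) and bounds the achievable ratios for all the ``uninteresting'' input pairs below $e^\epsilon$, leaving the $b=0$ versus $b=1$ comparison on outcome $\outpt$ as the unique binding constraint. Establishing this cleanly—so that privacy fails if and only if some $\xa$ pushes the count strictly above $2^{m-1}$—is the crux of the argument; the rest is the routine confirmation that the reduction is computable in polynomial time, which is immediate since the circuit is $\generalformula{}$ augmented by a constant number of gates and a fixed randomized-response gadget.
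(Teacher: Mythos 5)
Your high-level architecture---make $\xa$ the circuit input and $\ya$ the random bits, add a selector bit $b$ whose two settings are neighboring, and compare $q_{\xa} := \Pr_{\ya}[\phi(\xa,\ya)=1]$ against a fixed reference---is structurally viable (the paper's appendix proves hardness for large-input/one-output circuits in exactly this style), but your calibration is broken, and it cannot be rescued by the tools you invoke. The fatal case is $q_{\xa}=0$, which \ams{} permits (e.g.\ $\phi$ unsatisfiable). Differential privacy is two-sided, so the pair $(\xa,0),(\xa,1)$ with output $1$ requires both $g(q_{\xa}) \le e^\epsilon\cdot\frac12$ and $\frac12 \le e^\epsilon\cdot g(q_{\xa})$, where $g(q_{\xa})$ is the output-$1$ probability when $b=0$. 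Without the randomized-response wrapping, $g(q)=q$ and the second inequality fails outright at $q_{\xa}=0$; with the wrapping, $g(q)=p+(1-2p)q$, so completeness forces $e^\epsilon \ge \frac{1/2}{g(0)}=\frac{1}{2p}$. But soundness must detect the minimal violation $q_{\xa}=\frac12+2^{-m}$, and you cannot rely on the $\xa$-versus-$\xa'$ pairs to do it, since every $q_{\xa}$ may simultaneously equal $\frac12+2^{-m}$; the only reliable violating ratio is then $\frac{g(q_{\xa})}{1/2}=1+2(1-2p)2^{-m}$, which must exceed $e^\epsilon\ge\frac{1}{2p}=1+\frac{1-2p}{2p}$. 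For $p<\frac12$ this requires $\frac{1}{2p}<2^{-m+1}$, i.e.\ $2p>2^{m-1}$, which is impossible for $m\ge1$; and $p=\frac12$ kills the signal entirely. So no choice of $p$ and $e^\epsilon$ makes the fixed-$\frac12$-reference design correct, and in particular your proposed $e^\epsilon=\frac{2^{m-1}+1}{2^{m-1}}$ fails the completeness direction by a constant factor.

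The missing idea---the actual content of the paper's proof---is to skew the mechanism so that the worst completeness ratio is attained at the threshold $q=\frac12$ rather than at the corner $q=0$. The paper uses a one-bit input $z$ and computes $b=z\oplus\neg(p_1\vee\phi(\xa,\ya))$, outputting $(\xa,b)$: this maps $q\in[0,\frac12]$ to a flip probability $\frac12(1-q)\in[\frac14,\frac12]$, so $q=0$ yields a fair coin (ratio $1$), $q=\frac12$ yields ratio exactly $3$, and any $q>\frac12$ yields ratio strictly above $3$; hence $e^\epsilon=3$ decides \ams{} exactly, and since the input is a single bit there are no $\xa$-pairs to check at all (the universal quantification over $\xa$ is absorbed into the quantification over outputs). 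Your selector design can be repaired by the same principle: when $b=1$ output a fair coin, and when $b=0$ output $r_0\vee(r_1\wedge\phi(\xa,\ya))$, whose output-$1$ probability is $\frac12+\frac14 q_{\xa}$; then the binding ratio is $\frac{1/2}{1/2-q_{\xa}/4}$, which is at most $\frac43$ exactly when $q_{\xa}\le\frac12$, so $e^\epsilon=\frac43$ works (this is essentially the paper's appendix construction). The point is that the two branches must agree exactly at $q=0$ and drift apart monotonically, hitting $e^\epsilon$ precisely at $q=\frac12$; a reference pinned at $\frac12$ with $e^\epsilon$ near $1$ can never satisfy both directions simultaneously.
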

\begin{proof}
We will reduce from \ams{} to \decideedp{} using randomized response. We will take a boolean formula $\generalformula{}$ and create a probabilistic circuit that is $\epsilon$-differentially private if and only if $\generalformula{}$ is \ams.

Consider the circuit $\generalcircuit{}$ which takes as input the value $z \in \zo{}$. It probabilistically chooses a value of $\xa{}\in \zo{n}$ and $\ya{}\in\zo{m}$ and one further random bit $p_1$ and computes 
$b=z\oplus \neg(p_1 \vee \phi(\xa{},\ya{}))$. The circuit outputs $(\xa{}, b)$.

\begin{claim}$\generalcircuit{}$ is $\ln(3)$-differentially private if and only if $\generalformula{}$ is \ams.
\end{claim}

Suppose $\phi \in \ams$ then, no matter the choice of $\xa{}$, 
\lipicscenter{0 \leq \Pr_y[\phi(\xa{},\ya{})=1] \leq \frac{1}{2},} and hence 
\lipicscenter{ \frac{1}{4}\leq \Pr_{y,p_1}[\neg(p_1 \vee \phi(\xa{},\ya{}))=1] \leq \frac{1}{2}.}  
We conclude the true answer $z$ is flipped between $\frac{1}{4}$ and $\frac{1}{2}$ of the time, observe this is exactly the region in which randomized response gives us the most privacy.  In the worst case  $p =\frac{1}{4} = \frac{1}{1+e^\epsilon}$, gives $e^\epsilon = 3$, so $\ln(3)$-differential privacy.
\goodbreak

In the converse, suppose $\phi \in \ems$, then for some  $\xa{}$ 
$$ \frac{1}{2} < \Pr_y[\phi(\xa{},\ya{})=1] \leq 1,$$ and 
then $$\Pr_{y,p_1}[\neg(p_1 \vee \phi(\xa{},\ya{}))=1] < \frac{1}{4},$$ in which case the randomized response does not provide $\ln(3)$-differential privacy.
\end{proof}

\begin{remark}
We skew the result so that in the positive case (when  $\phi \in \ams$) the proportion of accepting allocations is between $\frac{1}{4}$ and $\frac{1}{2}$, resulting in the choice of $\ln(3)$-differentially privacy. Alternative skews, using more bits akin to $p_1$, shows hardness for other choices of $\epsilon$.
\end{remark}

\subsubsection{Hardness by circuit shape}
In our proof of the upper-bound we use $\coNP$ to resolve the non-deterministic choice of both input and output. We show this is necessary in the sense $\coNP$ is still required for either large input or large output. 
The hardness proof used in \cref{lem:amstodecideedp} shows that when $|\psi| = n$ the problem is hard for $\Omega(1)$-bit input and $\Omega(n)$-bit output.

We can also prove this is hard for $\Omega(n)$-bit input and $\Omega(1)$-bit output. Intuitively a counter example to differential privacy has two choices: a pair of adjacent input and a given output upon which the relevant inequality will hold. So to ``refute'' $\ams$ the counterexample of the $\textsc{All}$ choice (i.e. $\xa$) can be selected in the input, rather than the output as in our case. Since the input is now non-trivial we must take care of what happens when the adjacent bit is in the choice of $\xa$. Details are given in the full version.

Further the problem is in $\P^{\#\P}$ for $O(\log(n))$-bit input and $O(\log(n))$-bit output, as in this case, the choices made by $\coNP$ can instead be checked deterministically in polynomial time. In this case we show $\PP$-hardness, which applies even when there is 1-bit input and 1-bit output.

\section{On the complexity of deciding approximate differential privacy}
\label{sec:eddp}

It is less clear whether deciding $(\epsilon,\delta)$-differential privacy can be done in $\conpsp{}$. First we consider restrictions to the shape of the circuit so that $\conpsp{}$ can be recovered, and then show that in general the problem is in $\conpspsp{}$. 

Recall that in the case of  $\epsilon$-differential privacy it was enough to consider singleton events $\{\outpt{}\}$ where $\outpt{}\in \zo{\ell}$, however in the definition of \eddpnoun{} we must quantify over output events $E\subseteq \zo{\ell}$. If we consider circuits with one output bit ($\ell = 1$), then the event space essentially reduces to $E \in \{\emptyset,\{0\},\{1\},\{0,1\}\}$ and we can apply the same technique. 

We  expand this to the case when the number of outputs bits is logarithmic $\ell \le \log(|\generalcircuit{}|)$. To cater to this, rather than guessing a violating $E\in \zo{\ell}$, we consider a violating subset of events $E\subseteq\zo{\ell}$. Given such an event $E$ we create a circuit $\generalcircuit{}_E$ on $\ell$ inputs and a single output which indicates whether the input is in the event $E$. The size of this circuit is exponential in $\ell$, thus polynomial in $|\generalcircuit{}|$. Composing $\generalcircuit{}_E \circ \generalcircuit{}$, we check the conditions hold for this event $E$, with just one bit of output.

\begin{claim} \label{clm:dedideddp_small_output_space}
\decideeddp{}, restricted to circuits $\generalcircuit$ with $\ell$ bit outputs where $\ell \le \log(|\generalcircuit|)$,  is in \conpsp{} (and hence \conpsp{}-complete).
\end{claim}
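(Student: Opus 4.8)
The plan is to show membership in \conpsp{} for circuits whose output width $\ell$ is at most logarithmic in $|\generalcircuit{}|$, following the same refutation strategy used for \decideedp{} but now handling the quantification over output events $E \subseteq \zo{\ell}$ rather than singleton outcomes. As in the pure case, the non-deterministic (co-)refuter guesses a pair of neighboring inputs $\inp{},\inpp{}$ and some description of a violating event $E$, then uses the $\#\P$ oracle to compute the two probabilities $\pr[\psi(\inp{}) \in E]$ and $\pr[\psi(\inpp{}) \in E]$ and checks whether $\pr[\psi(\inp{}) \in E] > e^\epsilon \cdot \pr[\psi(\inpp{}) \in E] + \delta$. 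The only genuinely new ingredient relative to \cref{thm:dedpconpsp} is producing, in polynomial time, a single-output circuit that tests membership in the guessed event $E$ so that the oracle call reduces cleanly to \ccsat{}.

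First I would make the guess of $E$ explicit. Since $E \subseteq \zo{\ell}$, the event is described by its indicator, i.e.\ a truth table of $2^\ell$ bits. Because $\ell \le \log(|\generalcircuit{}|)$ we have $2^\ell \le |\generalcircuit{}|$, so this truth table has polynomial length and can be written down by the non-deterministic machine as part of its witness, alongside the $2n$ bits encoding $\inp{},\inpp{}$. Next I would build the circuit $\generalcircuit{}_E$ on $\ell$ inputs with a single output, realizing the indicator of $E$: for each of the $2^\ell$ accepting strings one forms a conjunction of the $\ell$ (possibly negated) input literals, then takes the disjunction over all accepting strings. This is a standard DNF-from-truth-table construction of size $O(2^\ell \cdot \ell)$, hence polynomial in $|\generalcircuit{}|$. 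Composing $\generalcircuit{}_E \circ \generalcircuit{}$ yields a randomized circuit with a single output bit that is true precisely when $\generalcircuit{}(\inp{},\proballoc{}) \in E$.

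With the single-output composite in hand, computing $\pr[\psi(\inp{}) \in E]$ becomes exactly the \ccsat{} instance already used in \cref{thm:dedpconpsp}: fix the data inputs to $\inp{}$, leave the $m$ random bits free, and ask the $\#\P$ oracle to count satisfying allocations; dividing by $2^m$ gives the probability. The same is done with $\inpp{}$. The machine then verifies the approximate-DP inequality in polynomial time (the probabilities and $e^\epsilon,\delta$ are all rationals of polynomial bit-length, so the comparison is exact). Thus a violation is refutable in non-deterministic polynomial time with a $\#\P$ oracle, placing \decideeddp{} restricted to $\ell \le \log(|\generalcircuit{}|)$ in \conpsp{}; combined with the \conpsp{}-hardness already established for \decideedp{} (which is the special case $\delta = 0$, $\ell = 1$), the problem is \conpsp{}-complete.

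The step I expect to be the main obstacle is justifying that guessing the full event $E$, rather than a single output string, does not blow up the witness or the oracle reduction. The key observation that resolves it is precisely the logarithmic bound on $\ell$: it is what simultaneously keeps the truth-table witness polynomial and keeps $\generalcircuit{}_E$ polynomial in size, so that the reduction to \ccsat{} goes through unchanged. Without this bound the event space is exponential and the indicator circuit is no longer guaranteed to be polynomial, which is exactly why the unrestricted problem requires the stronger \conpspsp{} bound treated separately.
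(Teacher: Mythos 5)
Your proposal is correct and follows essentially the same route as the paper: guess neighboring inputs together with the event $E$ (representable in polynomial length since $2^\ell \le |\generalcircuit{}|$), build the single-output indicator circuit $\generalcircuit{}_E$ of size exponential in $\ell$ but polynomial in $|\generalcircuit{}|$, compose $\generalcircuit{}_E \circ \generalcircuit{}$, and reuse the $\#\P$-oracle counting argument from the pure case to check the $(\epsilon,\delta)$ inequality. Your explicit DNF-from-truth-table construction and exact rational comparison are just spelled-out versions of the steps the paper leaves implicit, and the completeness remark matches the paper's (hardness for small-output circuits is indeed established there).
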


The claim trivially extends to $\ell \le c\cdot\log(|\generalcircuit|)$ for any {\em fixed} $c>0$.  

\goodbreak

\subsection{\decideeddp{} is in \conpspsp{}}

We now show that $\decideeddp{}$ in the most general case can be solved in $\conpspsp{}$. We will assume $e^\epsilon = \alpha$ is given as a rational, with $\alpha = \frac{u}{v}$ for some integers $u$ and $v$. Recall we use $n,\ell$ and $m$ to refer to the number of input, output and random bits of a circuit respectively.
While we will use non-determinism to choose inputs leading to a violating event, unlike in Section~\ref{sec:edp} it would not be used for finding a violating event $E$, as an (explicit) description of such an event may be of super-polynomial length. It would be useful for us to use a reformulation of approximate differential privacy, using a sum over potential individual outcomes.

\begin{reformulation}[Pointwise differential privacy~\cite{barthe2016proving}\label{lem:pointwiseeddp}]
A probabilistic circuit $\generalcircuit{}$ is $(\epsilon,\delta)$-differentially private if and only if for all neighboring $\xa,\xa'\in X^n$  and for all $\outpt\in \zo{\ell}$,
\lipicscenter{\sum_{ \outpt\in{\zo{\ell}}} \delta_{\xa,\xa'}(\outpt)  \le \delta,}%
where $\ \delta_{\xa,\xa'}(\outpt) = 
\max\left(\pr[\psi(\xa) = \outpt] - e^\epsilon \cdot \pr[\psi(\xa') = \outpt], 0\right). $
\end{reformulation}

We define $\mathcal{M}$, a non-deterministic Turing Machine with access to a $\#\P$-oracle, and where each execution branch runs in polynomial time.  On inputs a probabilistic circuit $\generalcircuit$ and neighboring $\xa,\xa'\in X^n$ the number of accepting executions of  $\mathcal{M}$ would be proportional to $\sum_{ \outpt\in{\zo{\ell}}} \delta_{\xa,\xa'}(\outpt)$.

In more detail, on inputs $\generalcircuit$, $\xa$ and $\xa'$, $\mathcal{M}$ chooses  $\outpt \in \zo{\ell}$ and an integer $C \in \{1,2,\dots, 2^{m+\bits{v}}\}$ (this requires choosing $\ell+m+\bits{v}$ bits). 
Through a call to the $\#\P$ oracle, $\mathcal{M}$ computes 
\lipicscenter{ a  =  \left|\{\proballoc{}\in\zo{m} : \psi(\xa,\proballoc{}) = \outpt\}\right|} and \lipicscenter{b  =  \left|\{\proballoc{}\in\zo{m} : \psi(\xa',\proballoc{}) = \outpt\}\right|. }
Finally, $\mathcal{M}$ accepts if  $v \cdot a - u\cdot b \ge C$ and otherwise rejects.

\begin{lemma} \label{lemma:oracleacceptingpaths}
Given two inputs $\xa,\xa'\in X^n$, $\mathcal{M}(\psi,\xa,\xa'$) has exactly $v\cdot 2^m  \sum_{ \outpt\in{\zo{\ell}}} \delta_{\xa,\xa'}(\outpt)$ accepting executions.
\end{lemma}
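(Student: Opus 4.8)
The plan is to compute the number of accepting executions of $\mathcal{M}$ directly, by fixing the nondeterministic choice of $\outpt \in \zo{\ell}$ and counting, for each such $\outpt$, how many choices of the integer $C$ lead $\mathcal{M}$ to accept. Since the total number of accepting executions is a sum over all $\outpt$ of the number of valid $C$ for that $\outpt$, the key is to show that for a fixed $\outpt$ the count of accepting $C$ equals $v \cdot 2^m \cdot \delta_{\xa,\xa'}(\outpt)$.

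First I would recall that $\mathcal{M}$ accepts on a branch $(\outpt, C)$ precisely when $v \cdot a - u \cdot b \ge C$, where $a = |\{\proballoc : \psi(\xa,\proballoc)=\outpt\}|$ and $b = |\{\proballoc : \psi(\xa',\proballoc)=\outpt\}|$ depend only on $\outpt$ (and the fixed inputs), not on $C$. The integer $C$ ranges over $\{1,2,\dots,2^{m+\bits{v}}\}$. Hence, for a fixed $\outpt$, the number of values of $C$ satisfying $v \cdot a - u \cdot b \ge C$ is exactly $\max(v\cdot a - u \cdot b,\, 0)$, using that $C$ starts at $1$ (so a nonpositive right-hand side yields no valid $C$) and that the quantity $v\cdot a - u\cdot b$ never exceeds the upper bound $2^{m+\bits{v}}$ of the range (as $a \le 2^m$ and $v \le 2^{\bits{v}}$).

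Next I would relate $\max(v\cdot a - u\cdot b, 0)$ back to $\delta_{\xa,\xa'}(\outpt)$. By definition $\pr[\psi(\xa)=\outpt] = a/2^m$ and $\pr[\psi(\xa')=\outpt] = b/2^m$, and $e^\epsilon = \alpha = u/v$. Therefore
\lipicscenter{\delta_{\xa,\xa'}(\outpt) = \max\!\left(\frac{a}{2^m} - \frac{u}{v}\cdot\frac{b}{2^m},\,0\right) = \frac{1}{v \cdot 2^m}\max(v\cdot a - u\cdot b,\,0),}
so the number of accepting $C$ for this $\outpt$ is exactly $v \cdot 2^m \cdot \delta_{\xa,\xa'}(\outpt)$. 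Summing over all $\outpt \in \zo{\ell}$ gives the claimed total of $v\cdot 2^m \sum_{\outpt \in \zo{\ell}} \delta_{\xa,\xa'}(\outpt)$ accepting executions.

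I expect the only delicate point to be verifying that the upper bound $2^{m+\bits{v}}$ on the range of $C$ is large enough that it never truncates a genuinely accepting branch, i.e.\ that $v\cdot a - u\cdot b \le 2^{m+\bits{v}}$ always holds; this follows from $v \le 2^{\bits{v}}$, $a \le 2^m$ and $u\cdot b \ge 0$, but it must be stated so that the count of valid $C$ is cleanly $\max(v\cdot a - u\cdot b, 0)$ rather than a clipped value. The remaining reasoning is a routine unfolding of the definitions of $a$, $b$, $\delta_{\xa,\xa'}$ and the acceptance condition.
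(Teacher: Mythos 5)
Your proof is correct and takes essentially the same route as the paper's: both unfold $\delta_{\xa,\xa'}(\outpt)$ into the counts $a,b$ via $\pr[\psi(\xa)=\outpt]=a/2^m$, use $v\alpha=u$, and identify $\max(v\cdot a - u\cdot b,\,0)$ with the number of accepting choices of $C$, the paper writing this as one chain of equalities starting from $v\cdot 2^m\sum_{\outpt}\delta_{\xa,\xa'}(\outpt)$ while you count per fixed $\outpt$ and then sum. The range bound $v\cdot a - u\cdot b \le 2^{m+\bits{v}}$ that you flag explicitly is exactly the point the paper uses implicitly when it replaces the $\max$ by a sum of indicators over $C$, so your write-up is, if anything, slightly more careful on that step.
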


{\allowdisplaybreaks
\begin{proof}
Let $\ind{X}$ be the indicator function, which is one if the predicate $X$ holds and zero otherwise.
\begin{align*}
&v \cdot 2^m  \sum_{\outpt \in \zo{\ell}}  \delta_{\xa,\xa'}(\outpt{})= \sum_{\outpt \in \zo{\ell}}v\cdot 2^m \max\left( \pr[\psi(\xa) = \outpt] - \alpha \pr[\psi(\xa') = \outpt], 0 \right) 
\\&= \sum_{\outpt \in \zo{\ell}}v2^m \max\left( \frac{1}{2^m}\sum_{\proballoc\in\zo{m}}\ind{\psi(\xa,\proballoc) = \outpt} - \alpha \frac{1}{2^m}\sum_{\proballoc\in\zo{m}}\ind{\psi(\xa',\proballoc) = \outpt}, 0 \right) 
\\
&= \sum_{\outpt \in \zo{\ell}} \max\left( v\sum_{\proballoc\in\zo{m}}\ind{\psi(\xa,\proballoc) = \outpt} - v\alpha \sum_{\proballoc\in\zo{m}}\ind{\psi(\xa',\proballoc) = \outpt}, 0 \right) 
\\&= \sum_{\outpt \in \zo{\ell}} \max\left( v\sum_{\proballoc\in\zo{m}}\ind{\psi(\xa,\proballoc) = \outpt} - u \sum_{\proballoc\in\zo{m}}\ind{\psi(\xa',\proballoc) = \outpt}, 0 \right) 
\\&= \sum_{\outpt \in \zo{\ell}} \mkern-12mu \sum_{C= 1}^{2^{\bits{v}+m}} \ind{\max\left( v\sum_{\proballoc\in\zo{m}}\mkern-12mu\ind{\psi(\xa,\proballoc) = \outpt} - u \sum_{\proballoc\in\zo{m}}\mkern-12mu\ind{\psi(\xa',\proballoc) = \outpt}, 0 \right) \ge C}
\\&=\text{number of accepting executions in } \widehat{\mathcal{M}}\qedhere
\end{align*}\end{proof}}

We can now describe our $\conpspsp{}$ procedure for $\decideeddp{}$. The procedure takes as input a probabilistic circuit $\generalcircuit$.
\begin{enumerate}
\item Non-deterministically choose neighboring $\xa$ and $\xa' \in \zo{n}$ (i.e.,  $2 n$ bits).
\item Let $\mathcal{M}$ be the non-deterministic Turing Machine with access to a $\#\P$-oracle as described above. Create a machine $\widehat{\mathcal{M}}$ with no input that executes $\mathcal{M}$ on $\generalcircuit, \xa,\xa'$.
\item Make an $\#\P^{\#\P}$ oracle call  for the number of accepting executions $\widehat{\mathcal{M}}$ has. 
\item Reject if the number of accepting executions is greater than $v \cdot  2^m \cdot \delta$ and otherwise accept.
\end{enumerate}

By Lemma~\ref{lemma:oracleacceptingpaths}, there is a choice $\xa,\xa'$ on which the procedure rejects if and only if $\generalcircuit$ is not $(\epsilon,\delta)$-differentially private.

\subsection{Hardness}%
\cref{thm:dedpconpsp} shows that $\decideedp{}$ is $\conpsp{}$-complete, in particular $\conpsp{}$-hard and since $\decideedp{}$ is a special case of $\decideeddp{}$, this is also $\conpsp{}$-hard. Nevertheless the proof is based on particular values of $\epsilon$ and in the full version  we provide an alternative proof of hardness based on $\delta$. This proof result will apply for any $\epsilon$ (even for $\epsilon = 0$) and for a large range of $\delta$ (but not $\delta = 0$).

The proof proceeds by first considering the generalisation of $\ams$ to the version where \textit{minority}, i.e. less than $\frac{1}{2}$ of the assignments, is replaced with another threshold. This problem is also $\conpsp{}$-hard for a range of thresholds. Note however, if this threshold is exactly $1$ the problem is true for all formulae, and if the threshold is $0$ the problem is simply asks if the formula is unsatisfiable (a $\coNP$ problem).

This generalised problem can then be reduced to deciding $\decideeddp{}$, where the threshold corresponds exactly to $\delta$. It will turn out in the resulting circuit $\epsilon$ does not change the status of differential privacy, i.e. it is $(\epsilon,\delta)$-differentially private for all $\epsilon$, or not.

The proof shows hardness for $\Omega(n)$-input bits and $1$-output bit; the case in which there also exists a $\conpsp{}$ upper-bound. Hence, showing hardness in a higher complexity class, e.g., $\conpspsp{}$,  would require a reduction to a circuit with more output bits.

\section{Inapproximability of the privacy parameters \texorpdfstring{$\epsilon, \delta$}{epsilon, delta}}
\label{sec:approx}

Given the difficulty of deciding if a circuit is differentially private, one might naturally consider whether approximating $\epsilon$ or $\delta$ could be efficient. We show that these tasks are both $\NP$-hard and $\coNP$-hard. 

We show that distinguishing between $(\epsilon,\delta)$, and \eddpnoun{'} is $\NP$-hard, by reduction from a problem we call $\textsc{Not-Constant}$ which we also show is $\NP$-hard. A boolean formula is in $\textsc{Not-Constant}$ if it is satisfiable and not also a tautology. %

\begin{lemma}\label{lem:satnottautnphard}
\textsc{Not-Constant} is $\NP$-complete.  (hence \textsc{Constant} is $\coNP$-complete).
\end{lemma}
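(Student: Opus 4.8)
The goal is to prove that \textsc{Not-Constant} is $\NP$-complete, where a boolean formula $\generalformula{}$ is in \textsc{Not-Constant} if it is satisfiable but not a tautology. The plan is to establish membership in $\NP$ and then $\NP$-hardness separately.

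For membership in $\NP$, I would observe that a formula is in \textsc{Not-Constant} exactly when there exist two assignments $\xa, \xa' \in \zo{n}$ such that $\generalformula{}(\xa) = 1$ (witnessing satisfiability) and $\generalformula{}(\xa') = 0$ (witnessing non-tautology). The non-deterministic machine guesses this pair of assignments, using $2n$ bits, and verifies both evaluations in polynomial time. This is clearly a polynomial-time verifiable certificate, placing \textsc{Not-Constant} in $\NP$.

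For $\NP$-hardness, I would reduce from \ssat{}'s decision analogue, i.e.\ plain satisfiability (\textsc{Sat}), which is the canonical $\NP$-complete problem. Given a \textsc{Sat} instance $\generalformula{}$ over variables $\seq{x}{n}$, I would construct a new formula $\generalformula{}'$ over $n+1$ variables by introducing a fresh variable $y$ and setting $\generalformula{}' = y \wedge \generalformula{}$. The key observation is that $\generalformula{}'$ is never a tautology, since setting $y = 0$ always falsifies it; and $\generalformula{}'$ is satisfiable if and only if $\generalformula{}$ is satisfiable (with $y = 1$). Hence $\generalformula{}' \in \textsc{Not-Constant}$ if and only if $\generalformula{} \in \textsc{Sat}$, and the construction is computable in linear time. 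The $\coNP$-completeness of \textsc{Constant} follows immediately, since \textsc{Constant} is the complement of \textsc{Not-Constant}.

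The proof is essentially routine, so I do not anticipate a genuine obstacle; the only point demanding a little care is ensuring the reduction forces non-tautology ``for free'' without affecting satisfiability. The conjunction with a fresh always-available falsifying variable accomplishes this cleanly, so the main step is simply verifying the two-directional equivalence $\generalformula{}' \in \textsc{Not-Constant} \iff \generalformula{} \in \textsc{Sat}$.
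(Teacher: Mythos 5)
Your proof is correct and follows essentially the same route as the paper: membership in $\NP$ via a pair of satisfying/falsifying assignments, and hardness by conjoining a fresh variable onto a satisfiability instance (the paper reduces from 3-SAT rather than general \textsc{Sat}, an immaterial difference). No gaps.
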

\goodbreak
\begin{proof}
Clearly, $\textsc{Not-Constant}\in\NP$, the witness being a pair of satisfying and non-satisfying assignments. We reduce 3-SAT to \textsc{Not-Constant}. Given a Boolean formula $\phi$ over variables $x_1,\ldots,x_n$ let $\phi'(x_1,\ldots,x_n,x_{n+1})=\phi(x_1,\ldots, x_n) \wedge x_{n+1}$. Note that $\phi'$ is never a tautology as $\phi'(x_1,\ldots,x_n,0) = 0$. Furthermore, $\phi'$ is satisfiable iff $\phi$ is.
\end{proof}

In \cref{section:aboutrr} we used randomized response in the pure differential privacy setting. 
We now consider the approximate differential privacy variant $RR_{\epsilon,\delta}: \{0,1\}\rightarrow\{\top,\bot\}\times\{0,1\}$ defined as follows:
$$RR_{\epsilon,\delta}(x) = \begin{cases}
(\top,x) & \mbox{w.p.}~\delta \\
(\bot, x) & \mbox{w.p.}~(1-\delta) \frac{\alpha}{1+\alpha} \\
(\bot,\neg x) & \mbox{w.p.}~(1-\delta) \frac{1}{1+\alpha}\end{cases}~~~ \text{ where }~~~ \alpha = e^\epsilon
$$

I.e., with probability $\delta$, $RR_{\epsilon,\delta}(x)$ reveals $x$ and otherwise it executes $RR_\epsilon(x)$. The former is marked with ``$\top$'' and the latter with ``$\bot$''. This mechanism is equivalent to the one described in~\cite{murtagh2016complexity} and is \eddpadj{}.

\begin{definition}
Let $0\le \epsilon \le \epsilon'$, $0\le \delta \le \delta'\le 1$, with either $\epsilon < \epsilon'$ or $\delta < \delta'$. The problem \disinguished takes as input a circuit $\generalcircuit{}$, guaranteed to be either \eddpadj{}, or \eddpadj{'}. The problem asks whether $\generalcircuit{}$ is \eddpadj{} or $(\epsilon',\delta')$-differentially private.
\end{definition}

\begin{lemma}
\disinguished is $\NP$-hard (and $\coNP$-hard).
\end{lemma}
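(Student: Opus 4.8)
The plan is to reduce from the \textsc{Not-Constant} problem (shown to be \NP-complete in \cref{lem:satnottautnphard}), using the approximate randomized response mechanism $RR_{\epsilon,\delta}$ defined above. Given a boolean formula $\generalformula{}$ over $n$ variables, I would build a circuit $\generalcircuit{}$ that takes a single input bit $z$, probabilistically samples an assignment $\xa{}\in\zo{n}$, computes $\generalformula{}(\xa{})$, and then behaves differently depending on whether $\generalformula{}(\xa{})$ is satisfied. The guiding idea is that a formula which is \emph{constant} (either unsatisfiable or a tautology) should yield a circuit on the ``good'' privacy side $(\epsilon,\delta)$, whereas a \emph{non-constant} formula should force the ``bad'' side $(\epsilon',\delta')$. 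The circuit output would combine the sampled $\xa{}$ together with the output of a randomized-response-style gadget applied to $z$, so that the privacy profile as one varies the input $z$ from $0$ to $1$ reflects exactly the satisfiability/tautology status of $\generalformula{}$.

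Concretely, I would arrange the gadget so that the mixture of $RR_{\epsilon,\delta}$ and $RR_{\epsilon',\delta'}$ applied to $z$ is controlled by the satisfaction of $\generalformula{}(\xa{})$: on the assignments where $\generalformula{}$ is satisfied the circuit uses one noise level, and on the rest it uses another. When $\generalformula{}$ is constant, every sampled $\xa{}$ lands in the same case, so the circuit reduces (up to relabelling the random branch) to a single clean randomized response and is \eddpadj{}; when $\generalformula{}$ is non-constant, both cases occur with positive probability, producing a strict mixture whose worst-case privacy loss degrades to $(\epsilon',\delta')$ and is \emph{not} \eddpadj{}. This makes the reduction land exactly in the promise gap of \disinguished{}, with the \emph{Yes} instances of \textsc{Not-Constant} mapping to circuits that are $(\epsilon',\delta')$- but not $(\epsilon,\delta)$-private, and the \emph{No} instances mapping to circuits that are \eddpadj{}. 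The reduction is polynomial-time because $RR$ gadgets have fixed size and the circuit for $\generalformula{}$ is given directly.

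The main obstacle I anticipate is calibrating the gadget so that the two privacy regimes are \emph{cleanly separated} across the promise: I must ensure that a constant $\generalformula{}$ yields a circuit whose privacy loss is at most $(\epsilon,\delta)$ on \emph{every} input pair and \emph{every} event $E$, while a non-constant $\generalformula{}$ is provably worse than $(\epsilon,\delta)$ — not merely worse on some intermediate parameter. Since the only adjacency is in the single bit $z$, the relevant comparison is between the output distributions of $\generalcircuit{}(0)$ and $\generalcircuit{}(1)$, and I would compute the per-outcome privacy loss $\delta_{\xa,\xa'}(\outpt{})$ from \cref{lem:pointwiseeddp} to verify the bound lands on the correct side of the threshold in each case. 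Getting the numerology right so the gap survives both conditions $\epsilon<\epsilon'$ and $\delta<\delta'$ (in particular handling the sub-case where only $\delta$ differs, or only $\epsilon$ differs) is the delicate part. For \coNP-hardness I would simply run the same reduction from \textsc{Constant}, which is \coNP-complete by \cref{lem:satnottautnphard}, swapping the roles of the two promise sides; since \disinguished{} is a promise problem whose two answers are symmetric, hardness for one side immediately gives hardness for the other.
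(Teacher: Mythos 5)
Your reduction has a genuine flaw in the key correctness claim. In your construction the sampled assignment $\xa$ is internal randomness, not input; the only adjacency is $z=0$ versus $z=1$. Now fix an orientation of the gadget, say satisfied assignments trigger $RR_{\epsilon',\delta'}$ and falsified ones trigger $RR_{\epsilon,\delta}$. If $\generalformula{}$ is a tautology --- a \emph{No} instance of $\textsc{Not-Constant}$ --- then every sampled $\xa$ lands in the satisfied case, so the circuit is exactly $RR_{\epsilon',\delta'}(z)$ (together with an input-independent $\xa$), which is \eddpadj{'} but \emph{not} \eddpadj{}. That puts a No instance on the bad side of the promise, contradicting your claim that ``constant $\Rightarrow$ single clean randomized response $\Rightarrow$ \eddpadj{}.'' Relabelling the branches does not help: with the opposite orientation the unsatisfiable formulas break instead. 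What your construction actually distinguishes is satisfiable from unsatisfiable (respectively, tautology from non-tautology), not non-constant from constant.

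The obstruction is structural, not a matter of calibration: differential privacy is closed under mixtures whose mixing randomness is independent of the input. If both pure branches were \eddpadj{}, then any mixture of them --- even with the branch identity $\xa$ published --- would also be \eddpadj{}. Hence your intended dichotomy (``pure branch $=$ private, strict mixture $=$ non-private'') cannot be realized by any choice of gadget, and no tuning of the ``numerology'' will fix it. The paper's proof sidesteps this precisely by making $\xa$ part of the \emph{input}: there, ``constant'' means the component $RR_{\epsilon',\delta'}(\generalformula{}(\xa))$ is input-independent and therefore harmless, while ``non-constant'' supplies an adjacent pair $\xa,\xa{}'$ with $\generalformula{}(\xa)\ne\generalformula{}(\xa{}')$, on which $RR_{\epsilon',\delta'}$ is applied to a bit that flips, making the circuit exactly $(\epsilon',\delta')$- and not $(\epsilon,\delta)$-private; the extra output $RR_{\epsilon,\delta}(y)$ on a fresh input bit keeps the good-side instances tight at $(\epsilon,\delta)$. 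Your construction could be salvaged as a reduction from \textsc{Sat} (satisfiable $\mapsto$ bad side, unsatisfiable $\mapsto$ good side), which would still yield both NP- and coNP-hardness, but that is a different argument from the one you give, and its privacy-loss margin above $\delta$ is only of order $2^{-n}$, which would also weaken the inapproximability consequence the paper draws from this lemma.
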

\begin{proof}
We reduce \textsc{Not-Constant} to \disinguished.
Given the boolean formula $\phi(\xa)$ on $n$ bits, we create a probabilistic circuit $\psi$.  The input to $\psi$ consists of the $n$ bits $\xa$ plus a single bit $y$. The circuit $\psi$ has four output bits $(o_1,o_2,o_3,o_4)$ such that $(o_1,o_2) = RR_{\epsilon,\delta}(y)$ and $(o_3,o_4) = RR_{\epsilon',\delta'}(\phi(\xa))$.

Observe that $(o_1,o_2) = RR_{\epsilon,\delta}(y)$ is always $(\epsilon,\delta)$ differentially private. As for $(o_3,o_4) = RR_{\epsilon',\delta'}(\phi(\xa))$, 
if $\phi \in \textsc{Not-Constant}$ then there are adjacent $\xa,\xa{}'$ such that $\phi(\xa) \ne \phi(\xa{}')$. In this case, $(o_3,o_4) = RR_{\epsilon',\delta'}(\phi(\xa))$ is $(\epsilon',\delta')$-differentially private, and, because $(\epsilon,\delta) < (\epsilon',\delta')$, so is $\psi$ . On the other hand, if  $\phi \not\in\textsc{Not-Constant}$ then $\phi(\xa)$ does not depend on $\xa$ and hence $(o_3,o_4)$ does not affect privacy, in which case we get that $\psi$ is $(\epsilon,\delta)$ differentially private. 

The same argument also gives $\coNP$-hardness.
\end{proof}

Notice that the above theorem holds when $\delta = \delta'$ and $\epsilon < \epsilon'$ (similarly, $ \epsilon=\epsilon'$ and $\delta < \delta'$), which entails the following theorem:
\begin{theorem}\label{lem:approxhard}
Assuming $\P\ne \NP$, for any approximation error $\gamma > 0$, there does not exist a polynomial time approximation algorithm that given a probabilistic circuit $\psi$ and $\delta$   computes some $\hat{\epsilon}$, where $|\hat{\epsilon} - \epsilon| \le \gamma$  and $\epsilon$  is the minimal such that $\psi$ is $(\epsilon,\delta)$-differentially private within error $\gamma$. Similarly, given $\epsilon$, no such  $\hat{\delta}$ can be computed polynomial time where $|\hat{\delta} - \delta| \le \gamma$ and  $\delta$ is minimal.
\end{theorem}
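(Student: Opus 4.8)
The plan is to derive the inapproximability directly from the $\NP$-hardness of \disinguished established above, exploiting the observation that its hardness persists in the two degenerate regimes $\delta=\delta'$ (with $\epsilon<\epsilon'$) and $\epsilon=\epsilon'$ (with $\delta<\delta'$). The idea is that an efficient, one-sided $\gamma$-approximation of the minimal privacy parameter would let us separate the two promised cases of \disinguished, and hence decide \textsc{Not-Constant}, which is $\NP$-complete by \cref{lem:satnottautnphard}.

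For the $\epsilon$ direction, I would fix the target error $\gamma>0$ and instantiate the construction of the preceding lemma with $\delta=\delta'$ and with $\epsilon,\epsilon'$ chosen so that $\epsilon'-\epsilon>\gamma$. The first step is to argue that for this fixed $\delta$ the \emph{minimal} feasible $\epsilon$ is exactly $\epsilon$ when $\phi$ is constant and exactly $\epsilon'$ when $\phi\in\textsc{Not-Constant}$: in the constant case only the block $(o_1,o_2)=RR_{\epsilon,\delta}(y)$ can witness a violation, whereas in the non-constant case some adjacent pair flips $\phi$, so the block $(o_3,o_4)=RR_{\epsilon',\delta'}(\phi(\xa))$ forces $(\epsilon',\delta')$ and no better. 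Now suppose an approximation algorithm returns $\hat\epsilon$ satisfying $0\le\hat\epsilon-\epsilon_\star\le\gamma$, where $\epsilon_\star$ is the true minimal feasible value. Since the two possible minima are $\epsilon$ and $\epsilon'$, the returned value lies in $[\epsilon,\epsilon+\gamma]$ or in $[\epsilon',\epsilon'+\gamma]$, and these intervals are disjoint because $\epsilon+\gamma<\epsilon'$. Thresholding $\hat\epsilon$ against $\epsilon'$ therefore decides \textsc{Not-Constant} in polynomial time, which under the assumption $\P\ne\NP$ is impossible.

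The $\delta$ direction is entirely symmetric: fix $\gamma$, instantiate the same construction with $\epsilon=\epsilon'$ and $\delta,\delta'$ chosen with $\delta'-\delta>\gamma$, so that the minimal feasible $\delta$ for the fixed $\epsilon$ is $\delta$ in the constant case and $\delta'$ in the non-constant case. A one-sided $\gamma$-approximation of this minimal $\delta$ again lands in one of two disjoint intervals and so separates the two cases, deciding \textsc{Not-Constant}.

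The main obstacle I anticipate is not the reduction itself but the tightness claim underlying it. I must verify that $RR_{\epsilon,\delta}$ is \emph{exactly} \eddpadj{}, so that its minimal $\epsilon$ for the prescribed $\delta$ is precisely $\epsilon$ rather than something smaller, and that running the two randomized-response blocks on the independent inputs $y$ and $\phi(\xa)$ does not improve privacy in the favourable direction---i.e.\ that the worst-case adjacent pair in the non-constant case genuinely realises $(\epsilon',\delta')$. Once this is pinned down, the freedom to pick the gap $\epsilon'-\epsilon$ (or $\delta'-\delta$) strictly larger than $\gamma$ \emph{after} $\gamma$ is fixed is exactly what allows the one-sided approximation guarantee to do the distinguishing.
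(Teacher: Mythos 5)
Your proposal is correct and follows essentially the same route as the paper: the paper derives the theorem directly from the $\NP$-hardness of \disinguished, noting that the hardness already holds in the degenerate regimes $\delta=\delta'$, $\epsilon<\epsilon'$ (resp.\ $\epsilon=\epsilon'$, $\delta<\delta'$), so that choosing the parameter gap larger than $\gamma$ lets a $\gamma$-approximator separate the two promised cases and hence decide \textsc{Not-Constant}. The tightness point you flag (that $RR_{\epsilon',\delta'}$ realises its parameters exactly, and that adjacency changes only one of $y$ or $\xa$ so the two blocks never compose) is indeed the implicit content of the paper's lemma, and it checks out.
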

\begin{remark} 
The result also applies when approximating within a given ratio $\rho> 1$ (e.g. in the case of approximating $\epsilon$, to find $\hat{\epsilon}$ such that $\frac{\hat{\epsilon}}{\epsilon} \le \rho$). Moreover, the result also holds when approximating pure differential privacy, that is when $\delta = 0$.
\end{remark}

\goodbreak

\section{Related work}

Differential privacy was introduced in~\cite{dwork2006calibrating}. It is a definition of privacy in the context of data analysis capturing the intuition that information specific to an individuals is protected if every single user's input has a bounded influence on the computation's outcome distribution, where the bound is specified by two parameters, usually denoted by $\epsilon,\delta$. Intuitively, these parameters set an upperbound on privacy loss, where the parameter $\epsilon$ limits the loss and the parameter $\delta$ limits the probability in which the loss may exceed $\epsilon$.

Extensive work has occurred in the computer-assisted or automated of verification of differential privacy. Early work includes, PINQ~\cite{mcsherry2009privacy} and Airavat~\cite{roy2010airavat} which are systems that keep track of the privacy budgets ($\epsilon$ and $\delta$) using trusted privacy primitives in SQL-like and MapReduce-like paradigms respectively. In other work, programming languages were developed, that use the type system to keep track of the sensitivity and ensure the correct level of noise is added~\cite{reed2010distance,barthe2012probabilistic,d2013sensitivity,barthe2016programming}. Another line of work uses proof assistants to help prove that an algorithm is differentially private~\cite{barthe2016proving};  although much of this work is not automated, recent work has gone in this direction~\cite{albarghouthi2017synthesizing,zhang2017lightdp}. 

These techniques focuses on `soundness', rather than `completeness' thus are not amenable to complexity analysis. In the constrained case of verifying differential privacy on probabilistic automata and Markov chains there are bisimulation based techniques~\cite{tschantz2011formal,chatzikokolakis2014generalized}. Towards complexity analysis;~\cite{chistikov2019asymmetric} shows that computing the optimal value of $\delta$ for a finite labelled Markov chain is undecidable. Further~\cite{chistikov2018bisimilarity} and \cite{chistikov2019asymmetric} provides distances, which are (necessarily) not tight, but can be computed in polynomial time with an $\NP$ oracle and a weaker bound in polynomial time.
Recent works have focused on developing techniques for finding violations of differential privacy~\cite{DingWWZK18,BichselGDTV18}. The methods proposed so far have been based on some form of testing. Our result limits also the tractability of these approaches. Finally, \cite{barthe2019automated}  proposes an automated technique for proving differential privacy or finding counterexamples. This paper studies a constrained class of programs extending the language we presented here, and provides a `complete' procedure for deciding differential privacy for them. The paper does not provide any complexity guarantee for the proposed method and we expect our results to apply also in their setting.

As we already discussed, Murtagh and Vadhan~\cite{murtagh2016complexity} showed that finding the optimal values for the privacy parameters when composing different algorithms in a black-box way is  $\#\P$-complete, but also that approximating the optimal values can be done efficiently. In contrast, our results show that when one wants to consider programs as white-box, as often needed to achieve better privacy guarantees (e.g. in the case of the sparse vector technique), the complexity is higher.

Several works have explored different property testing related to differential privacy~\cite{DixitJRT13,JhaR13,gilbert2018property}, including verification~\cite{gilbert2018property}. In the standard model used in property testing, a user has only black-box access to the function and the observable outputs are the ones provided by a privacy mechanism. In contrast, our work is based on the program description and aim to provide computational limits to the design of techniques for program analyses for differential privacy.

We already discussed some works on quantitative information flow. In addition to those, it was shown that comparing the quantitative information flow of two programs on inputs coming from the uniform distribution is $\#\P$-hard \cite{yasuoka2010quantitative}. However, when quantifying over all distributions the question is $\coNP$-complete \cite{yasuoka2010quantitative}. 

As we remarked earlier, our language is equally expressive when integers of a fixed size are added. Recently Jacomme, Kremer and Barthe~\cite{barthe2020lics} show deciding equivalence of two such programs, operating over a fixed finite field, is $\coNP^{\C_=\P}$-complete and the majority problem, which is similar to pure differential privacy, is $\coNP^{\PP}$-complete---matching the class we show for deciding $\epsilon$-differential privacy. Further the universal equivalence problem, which shows the programs are equivalent over all field extensions, is decidable in 2-$\EXP$; the universal majority problem is not know to be decidable.

\section{Conclusions and future work}
\label{sec:conclusions}

\paragraph*{Verifying differential privacy of loop-free probabilistic boolean programs.} We have shown the difficulty of verifying differential privacy in loop-free probabilistic boolean programs through their correspondence with probabilistic circuits. Deciding  $\epsilon$-differential privacy  is $\conpsp{}$-complete and \eddpnoun{} is $\conpsp{}$-hard and in $\conpspsp{}$ (a gap that we leave for future work). Both problems are positioned in the counting hierarchy, in between the polynomial hierarchy $\PH$ and $\PSPACE$.

\paragraph*{Verifying differential privacy of probabilistic boolean programs.}
One interesting question that our work leaves open is the characterization of the complexity of deciding differential privacy problems for probabilistic boolean programs, including loops. Similarly to the works on quantitative information flow~\cite{chadha2014quantitative}, we expect these problems to be decidable and we expect them to be in $\PSPACE$. However, this question requires some further investigation that we leave for future work.

\paragraph*{Solvers mixing non-determinism and counting. }
Returning to our motivation for this work---developing practical tools for verifying differential privacy---our results seem to point to a deficiency in available tools for model checking. The model checking toolkit includes well established \textsc{Sat} solvers for $\NP$ (and $\coNP$) problems, solvers for further quantification in $\PH$, solvers for \textsc{\#Sat} (and hence for $\#\P$ problems\footnote{See, for example, \url{http://beyondnp.org/pages/solvers/}, for a range of solvers}). However to the best of our knowledge, there are currently no solvers that are specialized for mixing the polynomial hierarchy $\PH$ and counting problems $\#\P$, in particular $\conpsp{}$ and $\conpspsp{}$.

\paragraph*{Approximating the differential privacy parameters.} We show that distinguishing $(\epsilon,\delta)$-differential privacy from $(\epsilon',\delta')$ differential privacy where $(\epsilon,\delta) < (\epsilon',\delta')$ is both $\NP$- and $\coNP$-hard. We leave refining the classification of this problem as an open problem.

\newpage

\newpage
\appendix

\section{Equivalence and expressivity of circuits and programs}\label{appen:integersandequivalence}
This section will prove \cref{lemma:circuiteqprogram}, but first let us expand on \cref{remark:integers} to prove \cref{lemma:circuiteqprogram} for an apparently more general language. We observe in \cref{remark:equivwithwithoutintegers} that they are equivalent.

Let us consider the following consider programs expressed by the following language. This is an extension to the language defined in  \cref{sec:lfpp}, supporting integer operations.
\[\begin{array}{rcl@{\qquad}r}
x&::=& [\text{a}{-}\text{z}]^+ & \text{(variable identifiers)} \\
i&::=&  0 \ | \ 1  \ | \ \dots  \ | \ 2^{k}-1  & \text{(constants)} \\
b&::=& {\tt true}\ |\ {\tt false}\ |\ {\tt random}\ |\ x\ |\  b \wedge b\ |\ b \vee b\ |\  \neg b\  & \text{(boolean expressions)} \\
& & \qquad | \  e> e\ |\ e\ge e\ |\ e=e\\
e&::=&  i\ |\ x\ |\ e+e\ |\ e\times e\ |\ e-e\ |\ (e) \ |\ b\ |\ \texttt{sample} & \text{(integer expressions)} \\
c&::=& {\tt SKIP}\ |\ x := e\ |\ c;\ c\ |\ {\tt if}\ b\ {\tt then}\ c\ {\tt else}\ c & \text{(commands)} 
\\t&::=& x\ |\ t, x & \text{(list of variables)} \\
p&::=& {\tt input}(t);\  c;\ {\tt return}(t) & \text{(programs)} 
\end{array}
\]

We define the programming language to operate over integers of a fixed size, parametrised by $k$.  Integer expressions, denoted $e$, take on values from the field $F_{2^k} = \{0,\dots,2^{k}-1\}$, where each integer can be represented with $k$ bits. Formally we consider the operations working on unsigned integers, where overflows wrap around. Similarly one could equally consider signed integers operating over $\{-2^{k-1},\dots,2^{k-1}-1\}$. Boolean values can also be interpreted as the integers $0$ and $1$. The \texttt{sample} command uniformly chooses from $\{0,\dots,2^{k}-1\}$; this  is syntactic sugar for $2^{k-1} \times \texttt{random} + \dots + 2^0 \times \texttt{random}$. 

\begin{remark}\label{remark:equivwithwithoutintegers}This language is equally expressive without integers, as Boolean operations can be used to simulated them---although programming in such a language may be somewhat tedious. To see this note that the following proof shows that programs in these extended language can be converted into probabilistic circuits, and similarly converted back into the Boolean fragment of the program. Both of these conversions occur in \textit{linear} time.\end{remark}

\begin{proof}[Proof of \cref{lemma:circuiteqprogram}]
As noted in the proof sketch, it is clear that probabilistic boolean circuits can be converted into probabilistic loop-free Boolean programs.

For the reverse conversion, we show any probabilistic loop-free  program \textit{with Booleans and bounded integers} can be converted into a probabilistic \textit{Boolean circuit} by showing that every expression can be constructed as a fixed sub-circuit.
\begin{itemize}

\item Boolean operations $\wedge,\vee, \neg$ are built into the circuit.

\item  \texttt{random} is achieved by introducing a fresh random input to the circuit for each call to $\texttt{random}$.

\item In the evaluation of expressions $e$, the value of each sub-expression can be stored as the output at most $k$ gates. The standard operations ${+},{\times},{-},{\ge},{>},{=}$ can each be computed in a circuit (see e.g. \cite{oberman1979digital}). 
Multiplication of two $k$-bit numbers may require $2k$-bits; since the language is defined over the field, these are renormalised back into the field by taking the $k$ least significant bits. Similarly adding two $k$-bit numbers results in $k+1$-bits, which must similarly be transformed back into the field. Note that it is acceptable to have up to $2k$ bits temporarily in the circuit.

\item Assignment $x := e$. Denote the $k$ gates as the output of expression $e$ as the variable $x$. References to $x$ then use these $k$ output bits.

\item Branching ${\tt if}\ b\ {\tt then}\ c_L\ {\tt else}\ c_R$ simulated in the following way: Create a circuit for each branch, taking two copies of each variable that can be assigned in either $c_L$ or $c_R$. At the end the true copy of each variable can be set. Let $(x)_i$ be the $i^\text{th}$ bit of variable $x$  then set $x$ by $(x)_i \leftarrow ((x_L)_i \wedge b) \vee ((x_R)_i \wedge (\neg b))$, so that $x = x_L$ if $b$ is true and $x = x_R$ if $b$ is false.

\item Constants are achieved by expressing the number in binary and using $k$ gates to represent the value. Any gate can be forced to be one by $x \vee \neg x$ and forced to be zero by $x\vee\neg x$ for any input or random bit.

\item ${\tt return}(x_1,\dots,x_m)$ is achieved by dedicating the $m \times k$ gates as output gates.\qedhere

\end{itemize}

\end{proof}

\begin{example}\label{eg:rr0.25}
\cref{fig:eg:rr0.25} and the following loop-free probabilistic program demonstrates an equivalent implementation of randomised response with $p=0.25$, entirely within the Boolean fragment.
\begin{lstlisting}[mathescape=true]
input($x$) 
$r$ := random $\wedge$ random 
$y$ := $(x \wedge \neg r) \vee (\neg x \wedge r) $
return($y$)
\end{lstlisting}

\begin{figure}
\centering
\includegraphics[width=0.8\linewidth]{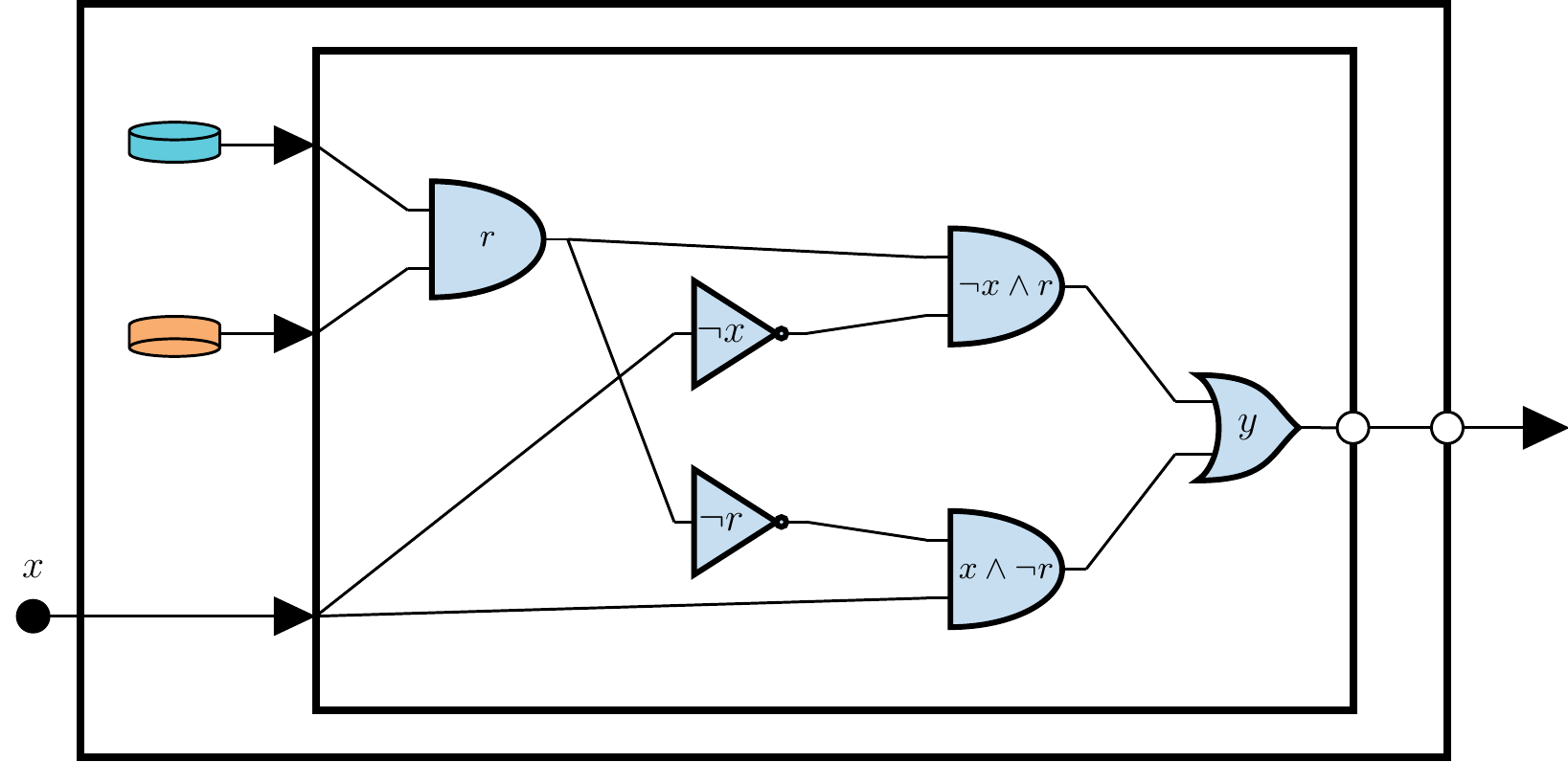}
\caption{Example Circuit Transformation from \cref{eg:rr0.25}}
\label{fig:eg:rr0.25}
\end{figure}

\end{example}

\begin{example}\label{eg:egtrans}
Consider the following loop-free probabilistic program, with $k=4$. It can be transformed into the circuit shown in \cref{fig:egtrans} through the procedure given in the proof of \cref{lemma:circuiteqprogram}.
\begin{lstlisting}[mathescape=true]
input(a,b)
x = 3 $\times$ a
y : = x + b
if random then z := x else z := y
return(z)
\end{lstlisting}

\begin{figure}
\centering
\includegraphics[width=0.9\textwidth]{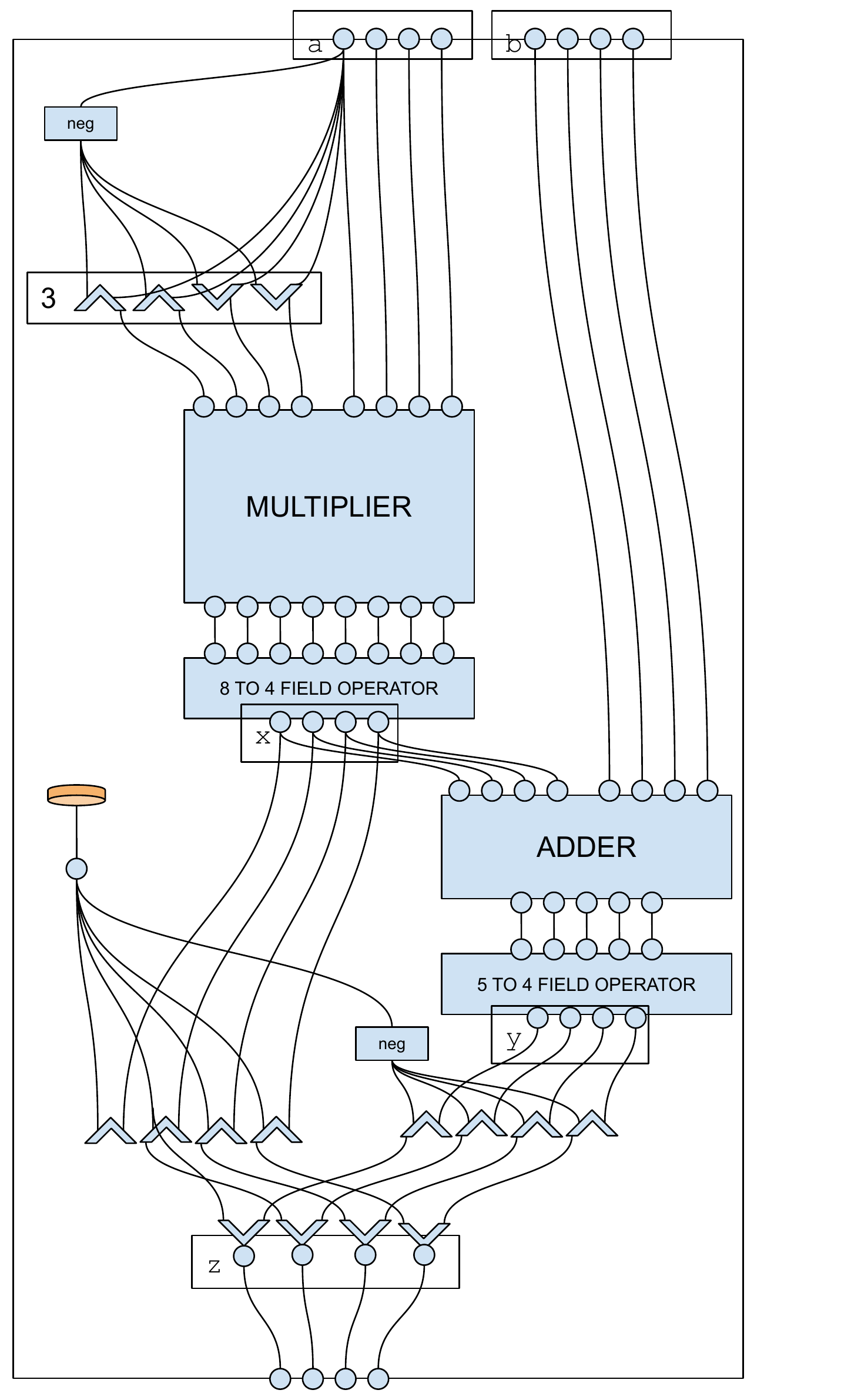}
\caption{Example Circuit Transformation from \cref{eg:egtrans}}
\label{fig:egtrans}
\end{figure}

\end{example}

\section{Hardness of \decideedp{} by number of input/output bits}\label{appen:circuitshape}

\begin{lemma}\label{lemma:pphard}
Given a circuit $\generalcircuit{}$, we show the the following hardness results for large and small number of input and output bits:

\begin{center}
\begin{tabular}{|@{\hskip 0.2cm}c @{\hskip 0.2cm}|@{\hskip 0.2cm}c@{\hskip 0.2cm}|@{\hskip 0.2cm}l@{\hskip 0.2cm}|}
\hline
$\#$ Input Bits & $\#$ Output Bits & Hardness\\
\hline
$\Omega(n)$ & 1 & $\coNP^{\#\P}$-hard\\
\hline
1 & $\Omega(n)$ & $\coNP^{\#\P}$-hard\\
\hline
1 & 1 & $\PP$-hard \\
\hline
\end{tabular}
\end{center}
\end{lemma}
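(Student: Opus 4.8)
The plan is to establish the three rows separately, reusing the randomized-response reduction of \cref{lem:amstodecideedp} as the backbone throughout. The middle row comes essentially for free: the circuit built in \cref{lem:amstodecideedp} already takes a single input bit $z$ and produces the $(n{+}1)$-bit output $(\xa,b)$, so it is precisely a reduction from \ams to \decideedp{} with $1$-bit input and $\Omega(n)$-bit output, and hence witnesses $\conpsp$-hardness for that circuit shape. The other two rows require new constructions.

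For the first row ($\Omega(n)$-bit input, $1$-bit output) I would reduce from \ams by \emph{relocating} the universally-quantified block $\xa$ out of the circuit's internal randomness and into its input. Concretely, on input $(z,\xa)\in\zo{n+1}$ the circuit samples $\ya$ and one extra bit $p_1$ and outputs the single bit $b=z\oplus\neg(p_1\vee\phi(\xa,\ya))$. Writing $q_\xa=\Pr_{\ya,p_1}[\neg(p_1\vee\phi(\xa,\ya))=1]=\tfrac12\left(1-\Pr_{\ya}[\phi(\xa,\ya)=1]\right)$, a flip of the $z$ bit (with $\xa$ fixed) is exactly randomized response with flip probability $q_\xa$, realising ratio $\max\left(\frac{1-q_\xa}{q_\xa},\frac{q_\xa}{1-q_\xa}\right)$. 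The new feature, absent from \cref{lem:amstodecideedp}, is the second neighbor type: a flip of one bit of $\xa$ (with $z$ fixed), which could in principle create a spurious violation. The observation that rescues the reduction is that when $\phi\in\ams$ every $q_\xa$ lies in $[\tfrac14,\tfrac12]$, so for neighboring $\xa,\xa'$ the cross ratios $q_\xa/q_{\xa'}$ and $(1-q_\xa)/(1-q_{\xa'})$ are at most $2<3$, while the $z$-flip ratio is at most $(3/4)/(1/4)=3$; thus the circuit is $\ln(3)$-differentially private. Conversely, if $\phi\in\ems$ there is some $\xa^*$ with $q_{\xa^*}<\tfrac14$, whose $z$-flip has ratio strictly above $3$, so the circuit is not $\ln(3)$-differentially private. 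This yields the reduction with $n+1$ input bits and a single output bit.

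For the last row ($1$-bit input, $1$-bit output) the target class is $\PP$, so I would reduce from the $\PP$-complete threshold-counting problem ``$\#\phi\ge 2^{m-1}$'' (at least half of the $2^m$ assignments satisfy $\phi$). On input $z\in\zo{}$ the circuit samples $\ya$ and a fresh bit $r$ and outputs $b=z\oplus(r\wedge\phi(\ya))$. The only neighbor is the $z$-flip, realising randomized response with flip probability $q=\Pr[r\wedge\phi(\ya)=1]=\#\phi/2^{m+1}$; conjoining with $r$ forces $q\le\tfrac12$, so the upper privacy constraint is met automatically and only the lower one binds. Choosing $\epsilon=\ln 3$, the circuit is $\epsilon$-differentially private iff $q\ge\tfrac14$, i.e.\ iff $\#\phi\ge 2^{m-1}$. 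Since this threshold problem is $\PP$-complete (and $\PP$ is closed under complement), \decideedp{} is $\PP$-hard already for a single input bit and a single output bit.

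The step I expect to be most delicate is the $\xa$-neighbor analysis in the first row: one must verify that pushing the \textsc{All} quantifier into the input does not produce privacy violations off the intended $z$-axis, which is exactly why the construction is engineered so that all flip probabilities sit in $[\tfrac14,\tfrac12]$ in the accepting case. The boundary behaviour also needs care, namely that $q_\xa=\tfrac14$ gives ratio exactly $3$ (admissible) in the \ams case while $q_{\xa^*}<\tfrac14$ gives ratio strictly above $3$ in the \ems case, so that the $\ln 3$ threshold separates the two cases cleanly.
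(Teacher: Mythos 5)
Your proposal is correct and follows essentially the same route as the paper: reduce \ams to \decideedp{} via a randomized-response gadget for the two $\conpsp$ rows (handling the new $\xa$-neighbor case by engineering all flip probabilities into a range where cross-ratios stay below the threshold), reuse \cref{lem:amstodecideedp} verbatim for the $1$-input/$\Omega(n)$-output row, and reduce a majority-counting problem to a $1$-bit randomized response for the $\PP$ row. The only differences are cosmetic: the paper's large-input gadget $(z\vee p_1)\wedge(\neg z\vee(p_2\vee(p_3\wedge p_4\wedge\phi)))$ makes the $z=0$ branch constant and works at threshold $\ln(4/3)$, whereas yours keeps the \cref{lem:amstodecideedp} circuit at $\ln 3$ and bounds the $\xa$-cross-ratios by $2$ — both verifications go through.
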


\begin{remark}
Note that the hardness results entail hardness for any larger number of input and output bits; for example $\Theta(\log n)$-input,$\Theta(\log n)$-output is $\PP$-hard and $\Theta(n)$-input,$\Theta(n)$-output is $\coNP^{\#\P}$-hard. 
\end{remark}

\begin{proof}[Proof for large input small output]
Given $\phi(\xa,\ya)$, we reduce $\phi \in \ams$ to $\decideedp{}$. Our resulting circuit $\psi$ will have $1$ output bit but $n + 1$ input bits

Let $\psi(\xa,z) = (z \vee p_1) \wedge ( \neg z  \vee (p_2 \vee (p_3 \wedge p_4 \wedge \phi(\xa,\proballoc{}))))$, with $p_1,\dots,p_4, \proballoc{}$ determined randomly. This circuit has the property:
\begin{itemize}
	\item If $z = 0$ return 1 w.p. $\frac{1}{2}$.
	\item If $z = 1$ return 1 w.p. $\frac{1}{2} + \frac{1}{4} \pr[\phi(\xa) = 1]$
\end{itemize}

\begin{claim} $\phi \in \ams \iff \ln(\frac{4}{3})$-differential privacy holds.\end{claim}
If $\phi \not\in \ams$ then for some $\xa$ with $\pr[\phi(\xa) = 1] > \frac{1}{2}$, $   \pr[\phi(\xa) = 0] < \frac{1}{2 }$
\begin{align*}
\frac{\pr[\psi(\xa, 0) = 0]}{\pr[\psi(\xa, 1) = 0]} &=\frac{\frac{1}{2}}{1 - \frac{1}{2} - \frac{1}{4}\pr[\phi(\xa) = 1])} 
\\&=\frac{\frac{1}{2}}{ \frac{1}{4}  + \frac{1}{4} - \frac{1}{4}\pr[\phi(\xa) = 1])}  
\\&==\frac{\frac{1}{2}}{ \frac{1}{4}  + \frac{1}{4} (1 -\pr[\phi(\xa) = 1])}
\\&=\frac{\frac{1}{2}}{ \frac{1}{4}  + \frac{1}{4} (1 -\pr[\phi(\xa) = 1])}
\\&=\frac{\frac{1}{2}}{ \frac{1}{4}  + \frac{1}{4} (\pr[\phi(\xa) = 0])}  
\\&> \frac{\frac{1}{2}}{ \frac{1}{4}  + \frac{1}{4} \frac{1}{2}} = \frac{4}{3} \approx 1.3
\end{align*}

If $\phi \in \ams$ then for all $\xa$ we have $\pr[\phi(\xa) = 1] \le \frac{1}{2}$, $   \pr[\phi(\xa) = 0] \ge \frac{1}{2}$
\[
\frac{\pr[\psi(\xa, 0) = 0]}{\pr[\psi(\xa, 1) = 0]} \le \frac{4}{3} \approx 1.3
\]
\begin{align*}
\frac{\pr[\psi(\xa, 1) = 0]}{\pr[\psi(\xa, 0) = 0]} &= \frac{1 - \frac{1}{2} - \frac{1}{4} \pr[\phi(\xa) = 1]}{\frac{1}{2}}
\\&= \frac{\frac{1}{4} + \frac{1}{4} \pr[\phi(\xa) = 0]}{\frac{1}{2}} \le 1
\end{align*}
\[
\frac{\pr[\psi(\xa, 1) = 1]}{\pr[\psi(\xa, 0) = 1]}  = \frac{\frac{1}{2} + \frac{1}{4}\pr[\phi(\xa) = 1] }{\frac{1}{2}} \le \frac{\frac{1}{2} + \frac{1}{4}\frac{1}{2} }{\frac{1}{2}} =1.25
\]
\[
\frac{\pr[\psi(\xa, 0) = 1]}{\pr[\psi(\xa, 1) = 1]}  = \frac{\frac{1}{2}}{\frac{1}{2} + \frac{1}{4}\pr[\phi(\xa) = 1] } \le 1
\]
\[
\frac{\pr[\psi(\xa, 0) = 1]}{\pr[\psi(\xa', 0) = 1]}  =\frac{\frac{1}{2}}{\frac{1}{2}} = 1
\]
\[
\frac{\pr[\psi(\xa, 1) = 1]}{\pr[\psi(\xa', 1) = 1]}  =\frac{\frac{1}{2} + \frac{1}{4} \pr[\phi(\xa) = 1] }{\frac{1}{2} + \frac{1}{4} \pr[\phi(\xa') = 1] } \le \frac{\frac{1}{2} + \frac{1}{4}\frac{1}{2} }{\frac{1}{2} + \frac{1}{4} 0 }= 1.25
\]
\[
\frac{\pr[\psi(\xa, 1) = 0]}{\pr[\psi(\xa', 1) = 0]}  =\frac{1 - \frac{1}{2} - \frac{1}{4} \pr[\phi(\xa) = 1] }{1- \frac{1}{2} - \frac{1}{4} \pr[\phi(\xa') = 1] } \le \frac{\frac{1}{2} + \frac{1}{4} 0 }{\frac{1}{2} - \frac{1}{4}\frac{1}{2} } = \frac{4}{3}\qedhere
\]\end{proof}

\begin{proof}[Proof for small input large output]
Given $\phi(\xa,\ya)$, we reduce $\phi \in \ams$ to
$\decideedp{}$. Our resulting circuit $\psi$ will have $1$ input bit
but $n + 1$ output bits.

Let $\psi(z) =(\xa, (z \vee p_1) \wedge ( \neg z  \vee (p_2 \vee (p_3 \wedge p_4 \wedge \phi(\xa,\proballoc{})))))$, with $p_1,\dots,p_4,\xa{},\proballoc{}$ all chosen randomly. Then the circuit has the property:
\begin{itemize}
	\item Choose and output some $\xa$ and,
	\item If $z = 0$ return 1 w.p. $\frac{1}{2}$.
	\item If $z = 1$ return 1 w.p. $\frac{1}{2} + \frac{1}{4} \pr[\phi(\xa) = 1]$
\end{itemize}

\begin{claim} $\phi \in \ams \iff \ln(\frac{4}{3})$-differential privacy holds.\end{claim}

If $\phi \not\in \ams$ then for some $\xa$ with $\pr[\phi(\xa) = 1] > \frac{1}{2}$, $   \pr[\phi(\xa) = 0] < \frac{1}{2 }$
\begin{align*}
\frac{\pr[\psi(0) = (\xa,0)]}{\pr[\psi(1) = (\xa,0)]} &=\frac{\frac{1}{2}}{1 - \frac{1}{2} - \frac{1}{4}\pr[\phi(\xa) = 1])}
\\& =\frac{\frac{1}{2}}{ \frac{1}{4}  + \frac{1}{4} - \frac{1}{4}\pr[\phi(\xa) = 1])} 
\\& =\frac{\frac{1}{2}}{ \frac{1}{4}  + \frac{1}{4} (1 -\pr[\phi(\xa) = 1])}
\\& =\frac{\frac{1}{2}}{ \frac{1}{4}  + \frac{1}{4} (1 -\pr[\phi(\xa) = 1])}
\\& =\frac{\frac{1}{2}}{ \frac{1}{4}  + \frac{1}{4} (\pr[\phi(\xa) = 0])}  
\\& > \frac{\frac{1}{2}}{ \frac{1}{4}  + \frac{1}{4} \frac{1}{2}} = \frac{4}{3} \approx 1.3
\end{align*}

If $\phi \in \ams$ then for all $\xa$ we have $\pr[\phi(\xa) = 1] \le \frac{1}{2}$, $   \pr[\phi(\xa) = 0] \ge \frac{1}{2}$
\[
\frac{\pr[\psi(0) = (\xa,0)]}{\pr[\psi(1) = (\xa,0)]} \le \frac{4}{3} \approx 1.3
\]
\begin{align*}
\frac{\pr[\psi(1) = (\xa,0)]}{\pr[\psi(0) = (\xa,0)]} &= \frac{1 - \frac{1}{2} - \frac{1}{4} \pr[\phi(\xa) = 1]}{\frac{1}{2}}
\\& = \frac{\frac{1}{4} + \frac{1}{4} \pr[\phi(\xa) = 0]}{\frac{1}{2}} \le 1
\end{align*}
\[
\frac{\pr[\psi(1) = (\xa,1)]}{\pr[\psi(0) = (\xa,1)]}  = \frac{\frac{1}{2} + \frac{1}{4}\pr[\phi(\xa) = 1] }{\frac{1}{2}} \le \frac{\frac{1}{2} + \frac{1}{4}\frac{1}{2} }{\frac{1}{2}} =1.25
\]
\[
\frac{\pr[\psi(0) = (\xa,1)]}{\pr[\psi(1) = (\xa,1)]}  = \frac{\frac{1}{2}}{\frac{1}{2} + \frac{1}{4}\pr[\phi(\xa) = 1] } \le 1\qedhere
\]
\end{proof}

\begin{proof}[Proof for small input small output]
Given $\phi(\xa)$, we reduce $\phi \in \textsc{Maj-Sat}$ to $\decideedp{}$. Our resulting circuit $\psi$ will have $1$ output bit and $1$ input bits

Let $\psi(z) = (p_1 \wedge z) \vee (\neg p_1 \wedge (z \oplus \phi(\proballoc{})))$, where $p_1$ and $\proballoc{}$ are chosen randomly.
Then the circuit has the property:
\begin{itemize}
	\item probability $\frac{1}{2}$ output $z$. 
	\item probability $\frac{1}{2}$ output $z \oplus \phi(\proballoc{})$. (Output $z$, flipped proportionally to the number of accepting allocations to $\phi$.)
\end{itemize}

\begin{claim} $\phi \in \textsc{Maj-Sat} \iff \psi $ is $\ln(3)$-differentially private\end{claim}

The probabilities behave as follows, where each case is also bound by $\frac{1}{2}$ in the direction consistent with the probability shown.

\begin{tabular}{l|l|p{1cm}p{1cm} |p{2cm}}
Output $\downarrow$             & Input $\rightarrow$ & 1 & 0 & Max-Ratio  \\ \hline
\multirow{2}{*}{0} & \textsc{Maj}    & $>\frac{1}{4}$  & $<\frac{3}{4}$ &$ >3$ \\
                   & \textsc{Min}    & $\le\frac{1}{4}$  & $\ge\frac{3}{4}$ &$ \le3$ \\
\multirow{2}{*}{1} & \textsc{Maj}    & $<\frac{3}{4}$  & $>\frac{1}{4}$ &$ >3$ \\
                   & \textsc{Min}    & $\ge\frac{3}{4}$  & $\le\frac{1}{4}$ &$ \le3$ \\
\end{tabular}

Then in the either \textsc{Maj} case we have the ratio is greater than $3$ (violating privacy) and for both \textsc{Min} case the ratio is bounded by 3 (satisfying privacy).
\end{proof}

\section{Direct Proof that \decideeddp{} is \conpsp{}-hard}\label{appen:directhardnessproof}

We prove that $\decideeddp{}$ is $\conpsp{}$-hard, even when there is just one output bit and for every $\epsilon$.
\begin{theorem} \label{thm:decideeddpisconpsphard}
\decideeddp{} is \conpsp{}-hard.
\end{theorem}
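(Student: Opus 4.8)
The plan is to prove \conpsp-hardness by a two-step reduction that routes \ams through a threshold generalisation of itself. For $\tau\in(0,1)$ let \textsc{All-Frac}$(\tau)$ denote the variant of \ams in which the fraction threshold $\tfrac12$ is replaced by $\tau$: given $\generalformula$ over $n+m$ variables, decide whether for every $\xa\in\zo{n}$ the fraction of $\ya\in\zo{m}$ with $\generalformula(\xa,\ya)=1$ is at most $\tau$. Since \ams is exactly \textsc{All-Frac}$(\tfrac12)$, this problem is already \conpsp-hard at $\tau=\tfrac12$. First I would reduce \textsc{All-Frac}$(\tau)$ to \decideeddp{} taking $\delta=\tau$ (which immediately yields hardness at $\delta=\tfrac12$), and then separately show \textsc{All-Frac}$(\tau)$ is \conpsp-hard for a dense range of $\tau$, upgrading hardness to a range of $\delta$.

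For the core reduction I would build a circuit $\generalcircuit$ with a single output bit whose input is $\xa\in\zo{n}$ together with one extra bit $w$; internally $\generalcircuit$ samples $\ya$ and outputs $o = w \wedge \generalformula(\xa,\ya)$. Writing $p_\xa = \pr_\ya[\generalformula(\xa,\ya)=1]$, this gives $\pr[\generalcircuit(\xa,1)=1]=p_\xa$ and $\pr[\generalcircuit(\xa,0)=1]=0$. The inputs $(\xa,0)$ and $(\xa,1)$ are neighboring (they differ only in $w$), so the defining inequality of \eddpadj{} applied to the event $\{1\}$ on this pair forces $p_\xa \le e^\epsilon\cdot 0+\delta=\delta$, and this bound is independent of $\epsilon$. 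Taking $\delta=\tau$, this already shows that if $\generalcircuit$ is \eddpadj{} then $\generalformula\in\textsc{All-Frac}(\tau)$.

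The crux, which I expect to be the main obstacle, is the converse together with confirming that no other neighbour/event pair ever imposes a stronger constraint, so that the privacy status is governed by $\delta$ alone and is the same for every $\epsilon\ge 0$. Because $\ell=1$, the only nontrivial events are $\{0\}$ and $\{1\}$, and neighboring inputs either flip $w$ or flip one coordinate of $\xa$, so the verification is a finite case analysis. Assuming $p_\xa\le\delta$ for all $\xa$, I would check: for the $w$-flip, the event-$\{1\}$ inequality $p_\xa\le\delta$ is the constraint itself, while the remaining three inequalities ($0\le e^\epsilon p_\xa+\delta$, $1-p_\xa\le e^\epsilon+\delta$, and $1\le e^\epsilon(1-p_\xa)+\delta$) are all implied by $0\le p_\xa\le\delta$ and $e^\epsilon\ge 1$; for an $\xa$-flip with $w=0$ the two distributions coincide; and for an $\xa$-flip with $w=1$ the four inequalities relating $p_\xa$ and $p_{\xa'}$ follow from $0\le p_\xa,p_{\xa'}\le\delta$ and $e^\epsilon\ge1$. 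The delicate point is confirming that the event-$\{1\}$ bound $p_\xa\le\delta$ always dominates and never accidentally tightens as $\epsilon$ varies. This establishes that $\generalcircuit$ is \eddpadj{} for some (equivalently every) $\epsilon\ge 0$ iff $\generalformula\in\textsc{All-Frac}(\tau)$ with $\tau=\delta$.

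Finally, to obtain hardness across a range of $\delta$ rather than only $\delta=\tfrac12$, I would reduce \ams to \textsc{All-Frac}$(\tau)$ by padding. Introducing $k$ fresh counted bits $\mathbf{c}$, I would replace $\generalformula$ by $\generalformula'(\xa,(\ya,\mathbf{c}))$ that outputs $1$ on a $\beta$-fraction of $\mathbf{c}$-settings irrespective of $\generalformula$, outputs $\generalformula(\xa,\ya)$ on a disjoint $\alpha$-fraction, and $0$ otherwise, so that $\pr_{\ya,\mathbf{c}}[\generalformula'=1]=\beta+\alpha\,p_\xa$, an affine monotone image of $p_\xa$. Choosing dyadic $\alpha,\beta$ so that the cutoff $(\tau-\beta)/\alpha$ lands in $[\tfrac12,\tfrac12+2^{-m})$ makes $p_\xa\le\tfrac12 \iff \beta+\alpha p_\xa\le\tau$ for every attainable count $p_\xa$, whence $\generalformula\in\ams \iff \generalformula'\in\textsc{All-Frac}(\tau)$. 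This realises a dense range of thresholds $\tau=\delta$ bounded away from $0$ and $1$, so \decideeddp{} is \conpsp-hard for all $\epsilon$ and for that range of $\delta$; the excluded endpoint $\delta=0$ is precisely the degenerate case where \textsc{All-Frac}$(0)$ collapses to unsatisfiability and lies merely in $\coNP$.
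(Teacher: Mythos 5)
Your proposal is correct and follows essentially the same route as the paper's proof: the same core gadget (a circuit computing $z \wedge \generalformula(\xa,\ya)$ with one extra input bit and one output bit, whose privacy status is governed by $\delta$ alone and is independent of $\epsilon$), the same finite case analysis over neighbours and the events $\{0\},\{1\}$, and the same threshold generalisation of \ams (the paper's \afs) made hard via dyadic padding with comparator circuits. The only differences are presentational: you run the reduction to \decideeddp{} before establishing hardness of the threshold problem, and you phrase the padding uniformly as the affine map $p_{\xa}\mapsto \beta+\alpha p_{\xa}$ where the paper splits into cases on the form of the threshold.
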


We could show  $\decideeddp{}$  by reduction from $\ams$, which would entail that $(\epsilon,\frac{1}{2})$-differential privacy is $\conpsp{}$-hard. To show hardness for a large range of $\delta$ we first generalise $\ams$ to $\afs$, showing this is also hard. We will then reduce $\afs$ to $\ams$.

\subsection{Generalising \ams}

Let us first generalise $\ams$ to \afs, which rather than requiring that the minority (half) of allocations to $y$ give true, rather no more than a fraction $f$. Similarly we generalise $\ems$ to \efs.

\begin{definition}
A formula $\phi(\xa,\ya)$, $\xa\in\zo{n},\ya\in\zo{m}$ and $f \in [0,1] \cap \mathbb{Q}$ is \afs if for every $\xa\in\zo{n}$
$$
\frac{|\{\ya \in \zo{m} \ | \ \phi(\xa,\ya) \text{ is true}\}|}{2^m} \le f
$$

\end{definition}
\begin{remark}
For $f = 0$, we require that for all $\xa$, no input of $\ya$ gives true, therefore we require $\phi$ is unsatisfiable. For $f = 1$, we have essentially no restriction and for $f = \frac{2^m - 1}{2^m}$ we require that $\phi$ is not a tautology. $\ams{}$ is then when $f = \frac{1}{2}$.
\end{remark}

\begin{definition}A formula $\phi$ is $\efs$ if it is not $\afs$. 
\end{definition}

This means a formula $\phi(\xa,\ya)$, $\xa\in\zo{n},\ya\in\zo{n}$ and $f \in [0,1] \cap \mathbb{Q}$ is \efs if there exists an allocation $\xa$ that more than $f$ fraction of allocations to $\ya$ result in $\phi(\xa,\ya)$ being true. That is there exists $\xa\in\zo{n}$ such that
$
\frac{|\{\ya \in \zo{m} \ | \ \phi(\xa,\ya) \text{ is true}\}|}{2^m} > f.
$

Towards showing $\afs$ is $\coNP^{\#\P}$-hard, we show \efs is $\NP^{\#\P}$-hard, entailing \cref{lem:afsconpsp}.

\begin{lemma}\label{lem:efsnpsp}\efs is $\NP^{\#\P}$-hard for $f\in[\frac{1}{2^m},\frac{2^m - 1}{2^m})$.
\end{lemma}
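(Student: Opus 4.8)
The goal is to show that $\efs$ is $\NP^{\#\P}$-hard for every $f$ in the range $[\frac{1}{2^m}, \frac{2^m-1}{2^m})$. The natural starting point is the known $\NP^{\#\P}$-hardness of $\ems$ (the case $f = \frac12$), and the plan is to reduce $\ems$ to $\efs$ at an arbitrary admissible threshold $f$ by \emph{padding} the formula with extra $\ya$-variables so as to rescale the fraction of accepting $\ya$-assignments from the $\frac12$-threshold to the $f$-threshold.

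\begin{proof}[Proof plan]
The plan is to reduce $\ems$, which is $\NP^{\#\P}$-complete, to $\efs$ for a given rational $f = \frac{a}{2^t}$ (after writing $f$ with a power-of-two denominator; since the threshold is compared against a count over $2^m$ assignments, only dyadic thresholds are relevant and a general rational can be rounded to the nearest achievable dyadic value without changing the comparison). Given an $\ems$ instance $\phi(\xa,\ya)$ with $\xa\in\zo{n}$, $\ya\in\zo{m}$, the idea is to build a new formula $\phi'(\xa, \ya')$ over the same $\xa$ but an enlarged block $\ya' = (\ya, \genalloc{w})$ of auxiliary variables. I would design $\phi'$ so that, for each fixed $\xa$, the fraction of accepting $\ya'$-assignments is an affine rescaling of the original fraction $p_{\xa} = \Pr_{\ya}[\phi(\xa,\ya)]$, chosen so that $p_{\xa} > \frac12$ translates exactly into ``accepting fraction $> f$'' and $p_{\xa} \le \frac12$ translates into ``accepting fraction $\le f$''. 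Concretely, using a small constant number of fresh coin-like variables $\genalloc{w}$ one can realise conjunctions/disjunctions of $\phi$ with fixed-probability events to move the pivot point from $\frac12$ to $f$, in the same spirit as the $z\oplus\neg(p_1\vee\phi)$ gadget used in \cref{lem:amstodecideedp}.

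The key steps, in order, are: (i) normalise $f$ to a dyadic rational and verify the admissible range $[\frac{1}{2^m},\frac{2^m-1}{2^m})$ is exactly what lets the target accepting-count be an integer strictly between $0$ and the total, so the gadget can hit it; (ii) construct the padding gadget on a constant (or logarithmically) many fresh variables $\genalloc{w}$ that maps the event ``$\phi(\xa,\ya)$ true'' through an affine transformation of probabilities, and write down $\phi'$ explicitly as a Boolean combination; (iii) compute the accepting fraction of $\phi'(\xa,\cdot)$ as a function of $p_{\xa}$ and check that the threshold $> f$ is crossed precisely when $p_{\xa} > \frac12$; (iv) conclude $\phi \in \ems \iff \phi' \in \efs$, and note the reduction is polynomial-time since it adds only polynomially many variables and gates. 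This yields \cref{lem:afsnpsp} and, by complementation, the $\coNP^{\#\P}$-hardness of $\afs$ asserted in \cref{lem:afsconpsp}.

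The main obstacle I anticipate is the \emph{arithmetic of hitting an arbitrary dyadic threshold exactly} with a clean Boolean gadget, uniformly over all $\xa$. Rescaling $\frac12$ to a general $f$ requires building events of prescribed dyadic probability and composing them with $\phi$ so that the pivot is exact rather than approximate; getting a single gadget that works across the whole interval $[\frac{1}{2^m},\frac{2^m-1}{2^m})$---rather than for one convenient value like $\frac14$ as in the pure-DP reduction---is the delicate part, and it is where the endpoint exclusions ($f=0$ forcing unsatisfiability, $f=1$ being trivial) become essential to why the range is half-open. I would handle this by allowing the number of auxiliary variables in $\genalloc{w}$ to grow with the bit-length of the denominator of $f$, so that the target count can be expressed exactly as a sum of powers of two, keeping the whole construction polynomial in the input size.
\end{proof}
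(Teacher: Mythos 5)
Your proposal takes essentially the same route as the paper's proof: the paper also reduces from \ems{} by conjoining $\phi$ with $k$ fresh variables to scale the accepting fraction by $2^{-k}$, disjoining with a disjoint fixed-probability event $\chi_{b/2^k}(z_1,\dots,z_k)$ (a comparator circuit, i.e.\ your ``sum of powers of two'') to shift the pivot from $\frac{1}{2}$ to an arbitrary dyadic $\frac{a}{2^{k+1}}$, and then disposing of non-dyadic rationals exactly by your integrality observation---the accepting count is a multiple of the grid spacing, so a threshold $\frac{a}{b}$ can be replaced by the dyadic grid point just below it without changing any comparison. The only difference is one of explicitness: you leave the gadget as a plan, whereas the paper writes it out as $(z_1\wedge\dots\wedge z_k\wedge\phi)\vee\chi_{b/2^k}(z_1,\dots,z_k)$, which is precisely the affine rescaling you describe.
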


\begin{corollary}\label{lem:afsconpsp}\afs is $\coNP^{\#\P}$-hard for $f\in[\frac{1}{2^m},\frac{2^m - 1}{2^m})$.
\end{corollary}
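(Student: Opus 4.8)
The plan is to derive \cref{lem:efsnpsp} as the main content and then obtain \cref{lem:afsconpsp} immediately by complementation. Since \ems is $\NP^{\#\P}$-complete (and in particular $\NP^{\#\P}$-hard), it suffices to reduce \ems to \efps{$f$} for each fixed $f$ in the stated range $[\frac{1}{2^m},\frac{2^m-1}{2^m})$. First I would recall that an \ems instance is a formula $\phi(\xa,\ya)$ asking whether some $\xa$ makes strictly more than half of the $\ya$-allocations satisfying, i.e.\ whether $\phi$ is \efps{$1/2$}. So the task is really to show that the threshold $\frac12$ in \efps{} can be ``moved'' to an arbitrary rational $f$ in the stated range while preserving the answer.

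The key idea is a padding/amplification construction on the $\ya$-variables. Given a target fraction $f = \frac{a}{2^{m'}}$ (after rescaling to a common power-of-two denominator), I would introduce fresh auxiliary random variables $\ya''$ disjoint from $\xa,\ya$ and build a new formula $\phi'(\xa, (\ya,\ya''))$ whose satisfying fraction over the enlarged $\ya$-block equals $g\bigl(\tfrac{|\{\ya : \phi(\xa,\ya)\}|}{2^m}\bigr)$ for a suitable affine, monotone rescaling $g$ chosen so that the original threshold $\frac12$ maps exactly to $f$. Concretely, one arranges that $\phi'$ is satisfied either when $\phi(\xa,\ya)$ holds together with some prescribed fraction of $\ya''$-patterns, or unconditionally on a fixed ``base'' fraction of the $\ya''$-space — an OR of two disjoint events over fresh bits, exactly the flavour of the $(p_1\vee\phi)$ trick already used in \cref{lem:amstodecideedp}. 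The fractions are picked so that $\Pr_y[\phi(\xa,\ya)]>\tfrac12$ translates to the padded fraction exceeding $f$, and $\le\tfrac12$ translates to $\le f$, uniformly over all $\xa$. All of this is computable in polynomial time because $f$ is given as a rational and the number of fresh bits is polynomial in $\log$ of its denominator.

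Having established \cref{lem:efsnpsp}, the corollary \cref{lem:afsconpsp} follows because \efps{$f$} and \afps{$f$} are complements by definition, and $\coNP^{\#\P}$ is the complement class of $\NP^{\#\P}$: a polynomial-time many-one reduction witnessing $\NP^{\#\P}$-hardness of \efps{$f$} is simultaneously a reduction witnessing $\coNP^{\#\P}$-hardness of the complementary problem \afps{$f$}, so no separate argument is needed.

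The main obstacle I anticipate is the arithmetic of hitting an \emph{arbitrary} rational $f$ exactly with a power-of-two denominator: the affine rescaling $g$ must send the integer count range $\{0,\dots,2^m\}$ into the padded count range so that the strict inequality at $\frac12$ maps precisely to the (semi-open) threshold at $f$, with no off-by-one slippage at the boundary. This is why the range is stated as the half-open interval $[\frac{1}{2^m},\frac{2^m-1}{2^m})$ — the endpoints $f=0$ (unsatisfiability, a $\coNP$ problem) and $f=1$ (trivially true) degenerate, and the boundary case must be excluded to keep the strict-versus-nonstrict correspondence intact. Getting the fresh-variable budget and the disjoint-event fractions to line up to produce exactly $f$ while keeping the reduction polynomial is the delicate bookkeeping step; the conceptual reduction itself is a routine threshold-shifting padding argument.
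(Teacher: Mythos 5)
Your overall architecture is exactly the paper's: establish \cref{lem:efsnpsp} by a polynomial-time reduction from \ems, then obtain \cref{lem:afsconpsp} by complementation (the same reduction function witnesses $\coNP^{\#\P}$-hardness of the complementary problem \afs, since \afs is by definition the complement of \efs). Your construction for dyadic thresholds also coincides with the paper's: for $f=\frac{a}{2^{k+1}}$ the paper sets $\phi'(\xa,\ya,z_1,\dots,z_k)=(z_1\wedge\dots\wedge z_k\wedge\phi(\xa,\ya))\vee\chi_{\frac{b}{2^k}}(z_1,\dots,z_k)$ with $b=\frac{a-1}{2}$, where $\chi_{\frac{b}{2^k}}$ is a comparator circuit true on exactly $b$ of the $2^k$ fresh-bit patterns and false on the all-ones pattern --- precisely your ``OR of two disjoint events over fresh bits'' realizing the affine map $t\mapsto\frac{t}{2^k}+\frac{b}{2^k}$ that carries the threshold $\frac{1}{2}$ to $f$.

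The genuine gap is your treatment of non-dyadic $f$. The problem is defined for every rational $f\in[0,1]$, and the corollary's range contains, e.g., $f=\frac{1}{3}$ as soon as $m\ge 2$; your plan to handle such $f$ ``after rescaling to a common power-of-two denominator'' is impossible, not merely delicate: $\frac{1}{3}$ has no representation over a power-of-two denominator, and every fraction realizable by a formula over $m+k$ choice bits is an integer multiple of $2^{-(m+k)}$, so no choice of fresh-variable budget and base fraction makes your affine map send $\frac{1}{2}$ exactly to $f$. No bookkeeping inside the padding construction can repair this; what is needed is a separate (and short) observation, which is the paper's third case: since realizable fractions lie on the grid $\{0,\frac{1}{2^M},\frac{2}{2^M},\dots,1\}$, if $\frac{z}{2^M}<\frac{a}{b}<\frac{z+1}{2^M}$ then ``fraction $>\frac{a}{b}$'' and ``fraction $>\frac{z}{2^M}$'' are the \emph{same} predicate on such formulas (both are equivalent to ``fraction $\ge\frac{z+1}{2^M}$''), so \efps{$\frac{z}{2^M}$} reduces to \efps{$\frac{a}{b}$} by the identity map, and the dyadic construction applied to $\frac{z}{2^M}$ finishes the proof. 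This observation also dissolves the ``strict-versus-nonstrict slippage at the boundary'' you flag as the main obstacle: one never needs to hit $f$ exactly, because the discreteness of the realizable count spectrum absorbs the rounding.
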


\begin{proof}[Proof of \cref{lem:efsnpsp} for $f$ of the form $\frac{1}{2^{k+1}}$]

We reduce  \ems, given  a formula $\phi(\xa,\ya)$ to \efps{$\frac{1}{2^{k+1}}$}.

(We assume $\frac{1}{2^{k+1}}$ takes $O(k)$ bits.)

Define a formula $\phi'(\xa,\genalloc{w}{})$, with $\genalloc{w}{} \in \zo{m + k}$. Let $\genalloc{w}{} = (y_1, \dots, y_m, z_1 \dots z_k)$, where $\ya = (y_1, \dots, y_m)$.

$\phi'(\xa,\genalloc{w}{})=  z_1 \wedge \dots \wedge z_k \wedge \phi(\xa, y_1, \dots, y_m)$

For $\xa $ fixed if $g$ allocations to $y_1, \dots, y_m$ satisfy $\phi$ then each of these satisfy $\phi'$ only when $z_1 = \dots = z_k = 1$. All remaining times are unsatisfied. 

Suppose $\frac{g}{2^m}$ of $\ya$'s satisfy $\phi(\xa,\ya)$ then $\frac{g}{2^m \cdot 2^k}$ $\genalloc{w}{}$'s satisfy $\phi'(\xa,\genalloc{w}{})$. 

That is we have:
\[\frac{g}{2^m \cdot 2^k} > \frac{1}{2^{k+1}} \iff \frac{g}{2^m} > \frac{1}{2}.\qedhere\]
\end{proof}

\begin{proof}[Proof of \cref{lem:efsnpsp} for $f$ of the form $\frac{a}{2^{k+1}}$]
We reduce  \ems, given  a formula $\phi(\xa,\ya)$ to \efps{$\frac{1}{2^{k+1}}$}. Assume $a$ odd (otherwise, half and take $\frac{a / 2}{2^k}$) and greater than 1 (otherwise use above).

Define a formula $\phi'(\xa,\genalloc{w}{})$, with $\genalloc{w}{} \in \zo{m + k}$. Let $\genalloc{w}{} = (y_1, \dots, y_m, z_1 \dots z_k)$, where $\ya = (y_1, \dots, y_m)$.  Let $b = \frac{a-1}{2}$ (b is always between 1 and $2^k - 1$).

Let $\chi_\frac{b}{2^k}(z_1, \dots, z_k))$ be a circuit on $k$ bits, which is true for $\frac{b}{2^k}$ of its inputs of $z_1 \dots z_k$, but not true for $z_1 = \dots = z_k = 1$ when $b < 2^k$. \footnote{Given $b,k$ such that $ 0  < \frac{b}{2^k} < 1$, we create a formula, over $2k$ bits, which given two $k$-bit integers $m,n$, return whether $m \le n$ (such a formula is of size polynomial in $k$). By fixing $n$ to $b$, we have a circuit on $m$ input bits which decides if $m \le b$. Instead sample over the $m$ bits of $m$ producing a circuit $\chi_\frac{b}{2^k}$ which is true on $\frac{b}{2^k}$ of its inputs. }

Then we let $\phi'(\xa,\genalloc{w}{})= (z_1 \wedge \dots \wedge z_k \wedge \phi(\xa, y_1, \dots, y_m) )\vee \chi_{\frac{b}{2^k}}(z_1, \dots, z_k)$.

That is the formula $\phi'$ is true whenever $z_1 = \dots = z_k = 1$ and $\phi$ is true, or on the $\frac{b}{2^k}$ choices of $z_1 \dots z_k$.

Suppose $\frac{g}{2^m}$ of $\ya$'s satisfy $\phi(\xa,\ya)$ then $\frac{g}{2^m \cdot 2^k} + \frac{b}{2^k} =\frac{g}{2^m \cdot 2^k} + \frac{a-1}{2^{k+1}}$ $\genalloc{w}{}$'s satisfy $\phi'(\xa,\genalloc{w}{})$. 

That is we have:
\[\frac{g}{2^m \cdot 2^k} + \frac{a-1}{2^{k+1}} > \frac{a}{2^{k+1}} \iff \frac{g}{2^m \cdot 2^k} > \frac{1}{2^{k+1}}\iff\frac{g}{2^m} > \frac{1}{2}\qedhere\]

\end{proof}

\begin{proof}[Proof of \cref{lem:efsnpsp} for $f$ of the form $\frac{a}{b}$]

Let $m$ be the number such that $\ya \in\zo{m}$, the number of $\ya$ bits of the formula, or the number of `MAJ' bits. Let $z$ be such that $\frac{z}{2^m} < \frac{a}{b} < \frac{z+1}{2^m}$. We reduce \efps{$\frac{z}{2^m}$} to \efps{$\frac{a}{b}$}, by simply taking $\phi$ unchanged.

Suppose $\frac{g}{2^m}$ of $\ya$'s satisfy $\phi(\xa,\ya)$ then 
\[\frac{g}{2^m}> \frac{z}{2^m} \iff \frac{g}{2^m} \ge \frac{z+1}{2^m}\iff \frac{g}{2^m} > \frac{a}{b}.\qedhere\]\end{proof}

\subsection{Main Proof of \texorpdfstring{\cref{thm:decideeddpisconpsphard}}{Theorem~\ref{thm:decideeddpisconpsphard}}}

\begin{proof}
Assume we are given an instance of \afs, a formula $\phi(\xa,\ya)$ for $\xa \in \zo{n}, \ya \in \zo{m}$ and $f \in [0,1]$.  

We define a circuit $\psi$, with inputs $\inp \in\zo{n+1}$, we write as $(z, \xa{}_1,\dots, \xa{}_n)$; matching the inputs $\xa$ and an additional bit $z$. There are $m$ probabilistic bits $\proballoc{} \in\zo{m}$, matching $\ya$. There is one output bit $\outpt{} \in \zo{1}$.  The circuit $\psi$  will behave like $\phi$ when $z = 1$ and simply output $0$ when $z = 0$; i.e. $\outpt{}_1 = z \wedge \phi(\xa{}_1,\dots, \xa{}_n,\proballoc{}_1,\dots, \proballoc{}_m)$.

\begin{claim} $\phi \in$ \afs if and only if $\psi$ is $(\epsilon,\delta)$-differentially private, for $\delta = f$ and any choice of $\epsilon$ (including zero).
\end{claim}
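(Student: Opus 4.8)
The plan is to read off the output distribution of $\psi$ and show that differential privacy reduces to a single family of constraints that is exactly the $\afs$ condition. For an input $(z,\xa)$, observe that when $z=0$ the output is the constant $0$, while when $z=1$ the output is $1$ with probability $p_{\xa} := |\{\proballoc\in\zo{m} : \phi(\xa,\proballoc)\}|/2^m$ and $0$ with probability $1-p_{\xa}$. By definition, $\phi\in\afs$ with fraction $f$ holds exactly when $p_{\xa}\le f$ for every $\xa$. Since there is a single output bit, I would use the fact (noted earlier in the excerpt for $\ell=1$) that the event space reduces to $\{\emptyset,\{0\},\{1\},\{0,1\}\}$, so $(\epsilon,\delta)$-differential privacy need only be checked on the events $\{0\}$ and $\{1\}$, the events $\emptyset$ and $\{0,1\}$ being vacuous.

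For the reverse direction I would isolate the binding constraint: take the neighboring pair $(1,\xa)$ and $(0,\xa)$, which differ only in the $z$-bit, together with the event $\{1\}$. Here the defining inequality reads $p_{\xa}\le e^{\epsilon}\cdot 0+\delta=\delta$. Hence if $\phi\notin\afs$ there is some $\xa$ with $p_{\xa}>f=\delta$, and because the neighbor has probability exactly $0$ of emitting $1$, the factor $e^{\epsilon}$ is irrelevant and the inequality fails for every $\epsilon$ (including $\epsilon=0$). This yields that $\psi$ is not \eddpadj{} for any $\epsilon$.

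For the forward direction, assuming $\phi\in\afs$ so that $p_{\xa}\le\delta$ for all $\xa$, I would verify the inequality for each neighbor type and each event, using only $p_{\xa}\le\delta$ and $e^{\epsilon}\ge 1$. Neighbors differing in an $\xa$-coordinate have $z=z'$; the $z=z'=0$ subcase is deterministic and trivial, and the $z=z'=1$ subcase together with the $z\ne z'$ type are handled uniformly. The event $\{1\}$ is immediate since $p_{\xa}\le\delta\le e^{\epsilon}p_{\xa'}+\delta$. The one case needing care is the event $\{0\}$, where I must show $1-p_{\xa}\le e^{\epsilon}(1-p_{\xa'})+\delta$; this follows from $p_{\xa'}\le\delta$, giving $e^{\epsilon}(1-p_{\xa'})+\delta\ge e^{\epsilon}(1-\delta)+\delta\ge(1-\delta)+\delta=1\ge 1-p_{\xa}$.

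The main obstacle is ensuring the equivalence holds \emph{uniformly in $\epsilon$}, in particular at the boundary $\epsilon=0$ where $e^{\epsilon}=1$ provides no multiplicative slack. The reason it nonetheless works is structural: the sole constraint that can be violated pits a distribution against a neighbor that outputs the critical value with probability $0$, so $\delta$ alone governs it and $\epsilon$ drops out, while in every other case $e^{\epsilon}\ge 1$ only loosens the requirement. I would therefore make the $\epsilon=0$ check explicit in the $\{0\}$-event case above, which is the one place the $+\delta$ slack and the hypothesis $p_{\xa}\le\delta$ must be combined.
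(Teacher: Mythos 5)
Your proof is correct and follows essentially the same route as the paper's: the same violating pair $(1,\xa),(0,\xa)$ with event $\{1\}$ shows that $\phi\notin\afs$ kills privacy independently of $\epsilon$, and the converse is the same exhaustive check over the events $\{0\},\{1\}$ and the neighbor types (differing in $z$ versus differing in an $\xa$-coordinate), using only $p_{\xa}\le\delta$ and $e^{\epsilon}\ge 1$. Your parametrization by $p_{\xa}$ merely compresses the paper's case-by-case inequalities into a uniform statement; no substantive difference.
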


\paragraph*{Direction: if $\phi \not\in$ \afs then not $(\epsilon,\delta)$-differentially private.}

Given $\phi \not\in$ \ams then there exists $\xa{}\in\zo{n}$ such that $\phi(\xa{},\ya{})$ is on more than $f$ portion of $\ya{}\in\zo{m}$. We show the differential privacy condition is violated exactly using this $\xa{}$, let $\inp = (1, \xa{}_1, \dots, \xa{}_n)$ and $\inpp = (0, \xa{}_1, \dots, \xa{}_n)$. Let us consider the probability of the event $o_1 = 1$.

Then we have $\pr[\psi(1, \xa{}_1, \dots, \xa{}_n) = 1] > f$ and $\pr[\psi(0, \xa{}_1, \dots, \xa{}_n)  = 1] = 0$. Violating differential privacy since, \begin{equation*}\pr[\psi(1, \xa{}_1, \dots, \xa{}_n) = 1] - e^\epsilon \pr[\psi(0, \xa{}_1, \dots, \xa{}_n) = 1] 
 > f - 0 =  \delta .\end{equation*}

\paragraph*{Direction: if $\phi \in$\afs then $(\epsilon,\delta)$-differentially private.}

Since $\phi \in$ \afs then for all $\xa{}\in\zo{n}$ we have $\phi(\xa,\ya)$ true for less or equal $f$ proportion of the allocations to $\ya{}\in\zo{m}$. Equivalently the more than $f$ of $\ya\in\zo{m}$ with $\phi(\xa,\ya)$ false.

To show privacy we consider all adajcent inputs and all output event. The output events are $E \subseteq\zo{1}$, giving $\{\}, \{0\},\{1\},\{0,1\}$. The probability of `no output' $\{\}$ is zero for all inputs, so cannot violate differential privacy. The probability of `output anything' $\{0,1\}$ is one for all inputs, so does cannot violate differential privacy. Thus we argue that events $\{0\}$ and $\{1\}$ do not violate differential privacy, for all adjacent inputs. 

Inputs take the form $\inp = (z, \xa{}_1, \dots, \xa{}_n)$, and $\inp,\inpp$ can be adjacent either with fixed $\xa$ and differing $z$ or, fixed $z$ and $\xa$ differing in one position; in each case we show $\pr[\psi(\inp) = E] -e^\epsilon \pr[\psi(\inpp) = E] \le \delta$ and $\pr[\psi(\inpp) = E] -e^\epsilon \pr[\psi(\inp) = E] \le \delta$

\paragraph*{Let $z$ be fixed to zero.} Hence we have $\inp,\inpp$ with $\xa{}$'s differing in one position. For $z = 0$ the circuit outputs zero in all cases, independently of $\xa{}$, thus does not violate differential privacy since $\pr[\psi(\inp) = E] = \pr[\psi(\inpp) = E]$.

\paragraph*{Let $z$ be fixed to one.} Hence we have $\inp,\inpp$ with $\xa{}$'s differing in one position. Without loss of generality suppose the difference is $x_j$.

For the event $E = \{1\}$ we have the probability being $\le f$ for each input; that is regardless of $\xa{}$ we have $\pr[\psi(\inp)  = 1] \le f$. So $\pr[\psi(\inp) = E] - e^\epsilon \pr[\psi(\inpp) = E] \le \pr[\psi(\inp) = E]  \le f = \delta$ and $\pr[\psi(\inpp) = E] - e^\epsilon pr[\psi(\inp) = E] \le \pr[\psi(\inpp) = E]\le f =\delta$.

For the event $E = \{0\}$ we have the probability being $\ge 1 - f$ for each input; that is regardless of $\xa{}$ we have $\pr[\psi(\inp)  = 0] \ge 1 - f$. So $\pr[\psi(\inp) = E] - e^\epsilon \pr[\psi(\inpp) = E] \le 1 - \pr[\psi(\inpp) = E]  \le f = \delta$ and $\pr[\psi(\inpp) = E] - e^\epsilon \pr[\psi(\inp) = E] \le 1- \pr[\psi(\inp) = E]\le f =\delta$.

\paragraph*{Let $\xa{}_1, \dots, \xa{}_n$ be fixed.}
We have $\xa{}_1, \dots, \xa{}_n$ fixed and the case $z = 0 $ and $z=1$, hence we have $\inp = (1,\xa{}_1, \dots, \xa{}_n)$ and  $\inpp = (0,\xa{}_1, \dots, \xa{}_n)$.

For the event $\{1\}$, when $z = 1$, we have $\pr[\psi(\inp) = 1] \le f$, but for $z=0$ the circuit is always $0$, thus $\pr[\psi(\inpp) = 1] = 0$.

Then $\pr[\psi(\inp) = 1] - e^\epsilon \pr[\psi(\inpp) = 1]= \pr[\psi(\inp) = 1] \le f =\delta$ and $\pr[\psi(\inpp) = 1] - e^\epsilon \pr[\psi(\inp) = 1] \le 0 \le \delta$.

For the event $\{0\}$ we have then $\pr[\psi(\inp) = 0]  \ge 1 - f$ and $\pr[\psi(\inpp)  = 0] = 1$
Then $\pr[\psi(\inp) = 0] - e^\epsilon \pr[\psi(\inpp) = 0]\le 1 - e^\epsilon \le 0 \le \delta$ and $\pr[\psi(\inpp) = 0] - e^\epsilon \pr[\psi(\inp) = 0]\le 1 - \pr[\psi(\inp) = 0] \le f = \delta$.\end{proof}

\section{Conditioning}\label{appen:conditioning}  Conditioning allows the run of a program to fail, so that the probability associated with failing runs is renormalised over all other runs (see e.g. \cite{DBLP:journals/toplas/OlmedoGJKKM18}). We show our decision procedures are robust to this notion, for which we simulate failure with an additional output bit and incorporate the renationalisation into our decision procedures. Naturally our lower bounds apply to this more general notion.

Conditioning can be encoded in a circuit by assuming a bit  which indicates whether the run has succeeded; in the case of failure we can assume all other output bits are false (denoted $\genalloc{0}$).  Thus for `proper' events in $\zo{\ell}$, the circuit formally outputs from  $\zo{\ell+1}$. We redefine our probability of an event as  \[
\pr[\generalcircuit{}(\xa{}) \in E] = \frac{|\{\proballoc{} \in \zo{m} \ | \ \generalcircuit{}(\xa, \proballoc{}) = (\top, \outpt{}) \text{ with } \outpt{}\in E\}|}
{|\{\proballoc{} \in \zo{m} \ | \ \generalcircuit{}(\xa, \proballoc{}) \ne (\bot, \genalloc{0})\}|}.
\]
Note that $|\{\proballoc{} \in \zo{m} \ | \ \generalcircuit{}(\xa, \proballoc{}) \ne (\bot, \genalloc{0})\}|$ is independent of the choice of $E$.

\subsection{\decideedp{}}
We generalise the procedure for $\decideedp{}$, maintaining $\conpsp{}$.
\begin{enumerate}
\item Guess $\xa,\xa' \in\zo{n},\outpt{}\in\zo{l}$
\begin{enumerate}
\item Compute $a = |\{\proballoc{} \in \zo{m} \ | \ \generalcircuit{}(\xa, \proballoc{}) = (\top, \outpt{})\}|$
\item Compute $b = |\{\proballoc{} \in \zo{m} \ | \ \generalcircuit{}(\xa', \proballoc{}) = (\top, \outpt{})\}|$
\item Compute $D_1 = |\{\proballoc{} \in \zo{m} \ | \ \generalcircuit{}(\xa, \proballoc{}) \ne (\bot, \genalloc{0})\}|$
\item Compute $D_2 = |\{\proballoc{} \in \zo{m} \ | \ \generalcircuit{}(\xa', \proballoc{}) \ne (\bot, \genalloc{0})\}|$
\item Reject if $D_1$ or $D_2$ is zero.
\item Accept if $a\cdot D_2 \le \exp(\epsilon)\cdot b \cdot D_2$ and otherwise reject.
\end{enumerate}
\end{enumerate}

\begin{claim*} $\generalcircuit{} \text{ is $\epsilon$-differentially private} \iff \text{$\decideedp{}$ accepts on all branches}$
\end{claim*}

\subsection{\decideeddp{}}
We generalise the procedure for $\decideeddp{}$, maintaining $\conpspsp{}$. Recall $e^\epsilon = \alpha = \frac{u}{v}$.
In more detail, on inputs $\generalcircuit$, $\xa$ and $\xa'$, $\mathcal{M}$ does the following:
\begin{enumerate}
  \item Choose $\outpt \in \zo{\ell}$ and an integer $C \in \{1,2,\dots, 2^{2m+\bits{v}}\}$ (this requires choosing $l+2m+\bits{v}$ bits). 
\item Through a call to the $\#\P$ oracle, $\mathcal{M}$ computes 
\begin{itemize}
\item $ a  =  \left|\{\proballoc{}\in\zo{m} : \psi(\xa,\proballoc{}) = (\top,\outpt)\}\right|$
\item $b  =  \left|\{\proballoc{}\in\zo{m} : \psi(\xa',\proballoc{}) = (\top,\outpt)\}\right|. $
\item  $D_1 = |\{\proballoc{} \in \zo{m} \ | \ \generalcircuit{}(\xa, \proballoc{}) \ne (\bot, \genalloc{0})\}|$
\item  $D_2 = |\{\proballoc{} \in \zo{m} \ | \ \generalcircuit{}(\xa', \proballoc{}) \ne (\bot, \genalloc{0})\}|$
\end{itemize}
\item 
$\mathcal{M}$ accepts if  $v \cdot a \cdot D_2 - u\cdot b\cdot D_1 \ge C$ and otherwise rejects.
\end{enumerate}
\begin{lemma} 
Given two inputs $\xa,\xa'\in X^n$, $\mathcal{M}(\psi,\xa,\xa'$) has exactly $v\cdot D_1 \cdot D_2  \sum_{ \outpt\in{\zo{\ell}}} \delta_{\xa,\xa'}(\outpt)$ accepting executions.
\end{lemma}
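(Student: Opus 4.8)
The plan is to reproduce the computation proving Lemma~\ref{lemma:oracleacceptingpaths}, tracking the two changes introduced by conditioning: the probabilities appearing in $\delta_{\xa,\xa'}(\outpt)$ are now the \emph{conditional} probabilities obtained by renormalising over non-failing runs, and the global scaling factor changes from $v\cdot 2^m$ to $v\cdot D_1\cdot D_2$. First I would record that, with $a,b,D_1,D_2$ the four quantities computed through the $\sharpp$ oracle, the (conditional) probabilities are $\pr[\psi(\xa)=\outpt] = a/D_1$ and $\pr[\psi(\xa')=\outpt] = b/D_2$; both are well defined on the relevant branch because the procedure rejects whenever $D_1$ or $D_2$ is zero.

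The heart of the argument is the pointwise identity
\[
v\,D_1\,D_2\,\delta_{\xa,\xa'}(\outpt) = \max\bigl(v\,a\,D_2 - u\,b\,D_1,\; 0\bigr),
\]
which follows by substituting $\alpha = u/v$, writing $\delta_{\xa,\xa'}(\outpt) = \max\bigl(a/D_1 - \tfrac{u}{v}\cdot b/D_2,\,0\bigr)$, and multiplying through by the nonnegative constant $v\,D_1\,D_2$ inside the $\max$. Summing over all $\outpt\in\zo{\ell}$ then gives $v\,D_1\,D_2\sum_{\outpt}\delta_{\xa,\xa'}(\outpt) = \sum_{\outpt}\max(v\,a\,D_2 - u\,b\,D_1,0)$, which is exactly the right-hand target of the lemma.

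It remains to match this sum with the number of accepting executions. For a fixed choice of $\outpt$, the machine accepts precisely for those $C\in\{1,\dots,2^{2m+\bits{v}}\}$ satisfying $C \le v\,a\,D_2 - u\,b\,D_1$, so the number of accepting values of $C$ equals $\max(v\,a\,D_2 - u\,b\,D_1,0)$, \emph{provided} the range of $C$ is wide enough to contain this value. Here I would discharge the range check: since $a,D_2\le 2^m$ and $v\le 2^{\bits{v}}$, we have $v\,a\,D_2 \le 2^{2m+\bits{v}}$, so no accepting value of $C$ is ever truncated. Summing these per-$\outpt$ counts over the $2^\ell$ choices of $\outpt$ reproduces $\sum_{\outpt}\max(v\,a\,D_2 - u\,b\,D_1,0)$, which by the identity above equals $v\,D_1\,D_2\sum_{\outpt}\delta_{\xa,\xa'}(\outpt)$.

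I expect the only real subtlety — rather than genuine difficulty — to be the cross-normalisation: the term for $\xa$ carries $D_2$ while the term for $\xa'$ carries $D_1$, reflecting that the common denominator of $a/D_1$ and $(u/v)(b/D_2)$ is $D_1 D_2$. Getting these factors the right way round is what makes the acceptance test $v\,a\,D_2 - u\,b\,D_1 \ge C$ (rather than $v\,a\,D_1 - u\,b\,D_2$) the correct one. Beyond this, the calculation is the same indicator-function manipulation as in Lemma~\ref{lemma:oracleacceptingpaths}, now carried out with conditional probabilities and accompanied by the enlarged range bound on $C$.
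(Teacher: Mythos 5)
Your proposal is correct and takes essentially the same route as the paper's proof: the paper carries out the identical computation as a single chain of indicator-function equalities, absorbing the normalisations $1/D_1$ and $1/D_2$ into the factor $v\,D_1\,D_2$ and then counting the accepting values of $C$ up to $2^{2m+\bits{v}}$. Your explicit range check ($v\,a\,D_2 \le 2^{2m+\bits{v}}$) and the remark that the outer procedure's rejection on $D_1=0$ or $D_2=0$ keeps the conditional probabilities well defined are details the paper leaves implicit, but they do not change the argument.
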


\begin{proof}
Let $\ind{X}$ be the indicator function, which is one if the predicate $X$ holds and zero otherwise.
\begin{align*}
\mkern-26mu&vD_1D_2  \sum_{\outpt \in \zo{\ell}}  \delta_{\xa,\xa'}(\outpt{})= \sum_{\outpt \in \zo{\ell}}v D_1 D_2 \max\left( \pr[\psi(\xa) = \outpt] - \alpha \pr[\psi(\xa') = \outpt], 0 \right) 
\\\mkern-26mu&= \sum_{\outpt \in \zo{\ell}}vD_1D_2 \max\left( \frac{1}{D_1} \mkern-6mu\sum_{\proballoc\in\zo{m}}\mkern-18mu\ind{\psi(\xa,\proballoc) = (\top,\outpt)} - \alpha \frac{1}{D_2}\mkern-6mu\sum_{\proballoc\in\zo{m}}\mkern-18mu\ind{\psi(\xa',\proballoc) = (\top,\outpt)}, 0 \right) 
\\\mkern-26mu&= \sum_{\outpt \in \zo{\ell}} \max\left( vD_2\mkern-18mu\sum_{\proballoc\in\zo{m}}\mkern-18mu\ind{\psi(\xa,\proballoc) = (\top,\outpt)} - v\alpha D_1\mkern-18mu\sum_{\proballoc\in\zo{m}}\mkern-18mu\ind{\psi(\xa',\proballoc) = (\top,\outpt)}, 0 \right) 
\\\mkern-26mu&= \sum_{\outpt \in \zo{\ell}} \max\left( vD_2\mkern-18mu\sum_{\proballoc\in\zo{m}}\mkern-18mu\ind{\psi(\xa,\proballoc) = (\top,\outpt)} - u D_1\mkern-18mu\sum_{\proballoc\in\zo{m}}\mkern-18mu\ind{\psi(\xa',\proballoc) = (\top,\outpt)}, 0 \right) 
\\\mkern-26mu&= \sum_{\outpt \in \zo{\ell}}\mkern-15mu \sum_{C= 1}^{2^{2m + \bits{v}}}\ind{\max\left( v D_2\mkern-18mu\sum_{\proballoc\in\zo{m}}\mkern-18mu\ind{\psi(\xa,\proballoc) = (\top,\outpt)} - u D_1 \mkern-18mu\sum_{\proballoc\in\zo{m}}\mkern-18mu\ind{\psi(\xa',\proballoc) = (\top,\outpt)}, 0 \right) \ge C}
\\\mkern-26mu&=\text{number of accepting executions in } \widehat{\mathcal{M}}\qedhere
\end{align*}\end{proof}

We can now describe our $\conpspsp{}$ procedure for $\decideeddp{}$. The procedure takes as input a probabilistic circuit $\generalcircuit$.
\begin{enumerate}
\item Non-deterministically choose neighboring $\xa$ and $\xa' \in \zo{n}$ (i.e.,  $2 n$ bits)
\item Let $\mathcal{M}$ be the non-deterministic Turing Machine with access to a $\#\P$-oracle as described above. Create a machine $\widehat{\mathcal{M}}$ with no input that executes $\mathcal{M}$ on $\generalcircuit, \xa,\xa'$ 
\item Make an $\#\P^{\#\P}$ oracle call  for the number of accepting executions $\widehat{\mathcal{M}}$ has. 
\item Make an $\#\P$ oracle call  for 
\begin{itemize}
	\item  $D_1 = |\{\proballoc{} \in \zo{m} \ | \ \generalcircuit{}(\xa, \proballoc{}) = (\bot, \genalloc{0})\}|$
\item  $D_2 = |\{\proballoc{} \in \zo{m} \ | \ \generalcircuit{}(\xa', \proballoc{}) = (\bot, \genalloc{0})\}|$
	\end{itemize}

\item Reject if $D_1$ or $D_2$ is zero.
\item Reject if the number of accepting executions of $\widehat{\mathcal{M}}$ is greater than $v \cdot  D_1 \cdot D_2 \cdot \delta$ and otherwise accept.
\end{enumerate}

\begin{claim*} $\generalcircuit{} \text{ is $(\epsilon,\delta)$-differentially private} \iff \text{$\decideeddp{}$ accepts on all branches} $
\end{claim*} 
\end{document}